\newtheorem{Theorem}{Theorem}
\newtheorem{Proposition}{Proposition}
\newtheorem{Remark}{Remark}
\newtheorem{Corollary}{Corollary}
\title{Optimal Expert Selection for Distributed Mixture-of-Experts at the Wireless Edge}
\author{}
\newcommand{\removelatexerror}{\let\@latex@error\@gobble}
\begin{document}
\author{{Shengling~Qin},~{Hai~Wu},~{Hongyang~Du}, and~{Kaibin~Huang}

\thanks{S. Qin, H. Wu, H. Du and K. Huang are with the Department of Electrical and Electronic Engineering, The University of Hong Kong, Hong Kong (Email: \{slqin, wuhai, duhy, huangkb\}@eee.hku.hk). Corresponding author: H. Du.}}
\maketitle

\begin{abstract}

The emergence of distributed Mixture-of-Experts (DMoE) systems, which deploy expert models at edge nodes, offers a pathway to achieving connected intelligence in sixth-generation (6G) mobile networks and edge artificial intelligence (AI).
However, current DMoE systems lack an effective expert selection algorithm to address the simultaneous task-expert relevance and channel diversity inherent in these systems.
Traditional AI or communication systems focus on either performance or channel conditions, and direct application of these methods leads to high communication overhead or low performance.
To address this, we propose the DMoE protocol to schedule the expert inference and inter-expert transmission. This protocol identifies expert selection and subcarrier allocation as key optimization problems.
We formulate an expert selection problem by incorporating both AI performance and channel conditions, and further extend it to a Joint Expert and Subcarrier Allocation (JESA) problem for comprehensive AI and channel management within the DMoE framework.
For the NP-hard expert selection problem, we introduce the Dynamic Expert Selection (DES) algorithm, which leverages a linear relaxation as a bounding criterion to significantly reduce search complexity.
For the JESA problem, we discover a unique structural property that ensures asymptotic optimality in most scenarios. We propose an iterative algorithm that addresses subcarrier allocation as a subproblem and integrates it with the DES algorithm.
The proposed framework effectively manages the tradeoff between task relevance and channel conditions through a tunable importance factor, enabling flexible adaptation to diverse scenarios.
Numerical experiments validate the dual benefits of the proposed expert selection algorithm: high performance and significantly reduced cost. JESA consistently achieves higher accuracy compared to homogeneous expert selection and lowers the cost by up to 50\% compared to Top-k scheduling.

\end{abstract}

\begin{IEEEkeywords}
Mixture-of-Experts, edge AI, radio resource management, multi-domain tasks
\end{IEEEkeywords}

\section{Introduction}
The Mixture-of-Experts (MoE) is a new architecture for large language models advocated by numerous high-tech companies, e.g., OpenAI, DeepSeek, and Microsoft~\cite{jacobs1991adaptivemoe,jordan1994hierarchicalmoe,shazeer2017outrageouslymoe,liu2024deepseek}. An MoE model comprises a set of sub-models, each of which is trained for a specific task and called an \emph{expert}. The MoE models are gaining increasing popularity due to their effectives in executing multi-domain tasks while reining in computation complexity via activating only a subset of task-relevant experts. Meanwhile, edge AI, referring to machine learning and AI algorithms deployed distributively at the sixth-generation (6G) mobile network edge, offers a promising platform to deploy experts at the network edge to deliver intelligent services to users~\cite{edge2023letaief,inference2023shao,edge2020li,communication2020shao,liu2023earlyexit}. MoE is well-suited for edge AI as the experts in an MoE model can be easily separated and distributed at different resource-constrained nodes (edge devices or servers), creating the new paradigm of distributed MoE (DMoE)~\cite{xue2024wdmoe}. Then given a specific task, a wireless-connected expert can be activated depending on its task-relevance level as well as channel state. Enabling an efficient and effective DMoE system calls for developing a new class of \emph{radio resource management} (RRM) techniques with awareness of experts' task-relevance, which is a largely uncharted area and the theme of this work.

A DMoE system is characterized by the co-existence of two types of diversity -- \emph{expertise and channel diversity}. First, consider expertise diversity. A typical AI model exhibits a varying level of specialization across different domains. For instance, a math model may achieve high accuracy in solving math problems, whereas a medical model could yield poor results for the same task. As a result, the model shows significantly varying performance for different tasks, corresponding to different levels of task-model relevance. Then, given a task, the variation of relevance across distributed experts is called \emph{expertise diversity}~\cite{wu2024mba}. On the other hand, the classic notion of channel diversity refers to the variation of channel state across wireless links~\cite{mecsurvey}. To maintain a high Quality-of-Service (QoS) standard, the execution of a task in a DMoE system requires selecting not only the most relevant experts but also those with favorable channel conditions. This requires designing new RRM techniques that exploit both expert and channel diversity. The design approach is aligned with the main trend of integrated AI and communications in the area of edge AI~\cite{liu2023earlyexit,wu2024mba}.

RRM for DMoE systems is related to that for Mobile Edge Computing (MEC) systems as both involve cross-disciplinary designs integrating computing and communication. Nevertheless, RRM designs for MEC typically assumed generic computing tasks based on abstracted computation models (e.g., a task is characterized by its required number of processor clock cycles)~\cite{mecsurvey}. Such assumptions and models allow tractable designs for allocating radio resources (e.g., bandwidth, subcarriers, timeslots, and power) to minimize latency or energy consumption~\cite{tradeoff2018li}. For instance, an energy-efficient MEC framework is derived in~\cite{mecenergy} for optimal joint allocation of subcarrier-and-computational resources. Another approach as proposed in~\cite{saleem2020mecd2d} allocates spectrum and CPU resources, enabling devices to either partially offload computational tasks to the edge node or leverage nearby devices’ computational resources. In general, the existing MEC studies largely assume uniform performance across edge nodes for a specific task if their resources are identical; task-model relevance is mostly overlooked in the literature~\cite{mecsurvey}. 

Recent works in the area of edge AI have begun to address the diverse capabilities of AI models~\cite{liu2023earlyexit,xue2024wdmoe}. A recent study of DMoE presents a heuristic scheme to select expert models for a task based on the performance-to-latency ratio~\cite{xue2024wdmoe}. The consideration of task-model relevance is also reflected in the early-exiting inference scheme that halts the inference process based on the current model’s performance~\cite{liu2023earlyexit}. These early works rely on simplified or heuristic performance metrics and lack a systematic design approach with fine-grained joint optimization considering both expertise and channel diversity. Moreover, though the area of edge AI has a rich literature, many studies still rely on either abstract computational models or traditional deep neural networks (DNNs), resulting in the sub-optimality of relevant RRM strategies for the latest MoE architecture (see, e.g., \cite{wu2024mba,mecenergy}).

Classic RRM techniques have been extensively researched in the context of traditional wireless networks. These methods primarily focus on optimizing the allocation of power and bandwidth resources to enhance traditional network performance metrics, such as throughput, latency, and energy efficiency~\cite{you2016mec,you2017mec}.
The problem of optimal subcarrier and power allocation, along with rate control, has been solved in various contexts, including wireless communications in orthogonal frequency-division multiple access (OFDMA) systems~\cite{kivanc2003allocation}, small cell networks~\cite{zhang2018allocation}, broadband networks~\cite{niyato2006allocation}, and for proportional fair scheduling~\cite{Radunovic2007fairness, dai2009fairness}.
A range of solution methods have been developed, including greedy algorithm~\cite{kivanc2003allocation}, convex programming~\cite{zhang2018allocation}, Markov Decision Problem (MDP)~\cite{sun2021allocation}, machine learning and reinforcement learning~\cite{ceran2019allocation}, to solve the RRM problems.
While these classic RRM techniques provide a robust foundation for efficient resource utilization in traditional wireless networks, they are not directly transferable to DMoE systems.
The key distinction lies in the underlying paradigms. In traditional networks, scheduling schemes primarily affect time and energy consumption without impacting the correctness of executed tasks. In contrast, in DMoE systems, resource allocation directly influences not only the efficiency but also the correctness and quality of AI-driven tasks. This intricate interdependence between resource management and AI task performance presents unique challenges, underscoring the need for innovative RRM strategies that seamlessly integrate AI considerations with traditional metrics to achieve optimal system performance.


Specifically, we consider the RRM problem in a DMoE system, where multiple experts are deployed at edge nodes to collaboratively process user tasks. During inference, the tasks are not handled by a single expert but by a selected subset of experts to leverage their diverse knowledge. This necessitates data-intensive inter-expert communication and overcoming two resultant RRM challenges is the focus of this work: (1) selecting the appropriate experts (the expert selection problem) and (2) allocating channel resources to them (the joint expert and subcarrier allocation problem). These challenges differ from classic RRM problems because they interweave expertise and channel diversity. On the other hand, in centralized MoE where the model is deployed at a single node, channel states are not a concern with the task-relevance being the only consideration, leading to simple selection strategies like Top-k. On the contrary, in traditional generic MEC, task-relevance and other AI metrics are not considered and channel conditions dominate the selection process.
Then the main challenge faced by RRM for DMoE systems is to balance the following fundamental tradeoff. Given a task, allocating more resources to experts with high task-relevance levels may select those with unfavorable channel states but allocation based on experts' channel states may end up with those with mismatched expertise. To balance the tradeoff, we present a joint optimization framework to exploit both expertise and channel diversity in the DMoE system.
The main contributions of this work are described as follows.

\begin{enumerate}
    \item \textbf{DMoE Protocol}: We first design the DMoE protocol to enable the system of distributed experts to serve user tasks. With an MoE model partitioned and assigned to edge nodes to form experts, the user tasks, represented by queries consisting of tokens, are sent to a server that coordinates the experts to collaboratively perform inference. The processing of each query requires $L$ rounds of expert inference and inter-expert transmission, where each round involves executing a layer of the MoE model.
    Within each round, the server selects the most suitable experts for hidden states to leverage their domain-specific knowledge and allocates subcarriers for inter-expert transmission.
    The results are then aggregated to produce an output that integrates multi-domain knowledge. The exchange of high-dimensional hidden states dominates the communication overhead, making the expert selection and subcarrier allocation problems the focus of system optimization;
    \item \textbf{Optimal Expert Selection}: First, under the assumption of equal bandwidth allocation, our goal is to select experts that meet certain task-relevance requirements and also have favorable channel states to minimize the sum energy consumption. The formulated expert selection problem incorporates layer-wise QoS requirements representing task-relevance. Channel conditions, on the other hand, are integrated into the transmission energy term to denote the selection cost. To solve the NP-hard problem, we propose the optimal Dynamic Expert Selection (DES) algorithm, which is based on tree search and features a linear-relaxed lower bound as the bounding criterion to reduce search complexity. The bound is derived by identifying a critical expert and excluding less significant experts based on the energy-to-task-relevance order;
    \item \textbf{Joint Expert and Subcarrier Allocation} (JESA): We further formulate a JESA problem to manage both experts and subcarriers to minimize the overall energy consumption overhead. To solve this NP-hard problem, a joint optimization strategy is developed by integrating the task-relevance analysis from the preceding step and the optimal subcarrier allocation method. The direct iteration between expert selection and subcarrier allocation steps does not guarantee optimality, however, through an in-depth examination of the problem structure, we discover a unique property that guarantees asymptotic optimality. Leveraging this property, the block coordinate descent approach is adopted to achieve the asymptotic optimality.
    \item \textbf{Experiments}: We validate the effectiveness of the DMoE framework through extensive experiments with real datasets. The results demonstrate that our approach achieves higher accuracies compared to depth-unaware expert selection and incurs lower energy consumption compared to Top-k expert selection.
\end{enumerate}

The rest of this paper is organized as follows. Section II introduces the system model. Section III addresses the DMoE protocol. Section IV introduces the expert selection and the JESA problems. Sections V and VI present the solutions for the two problems, respectively, yielding the DES and JESA algorithms. Experimental results are provided in Section VII and concluding remarks in Section VIII.

\section{System Model}

\begin{figure*}[t]
    \centering
    \includegraphics[width=1.0\linewidth]{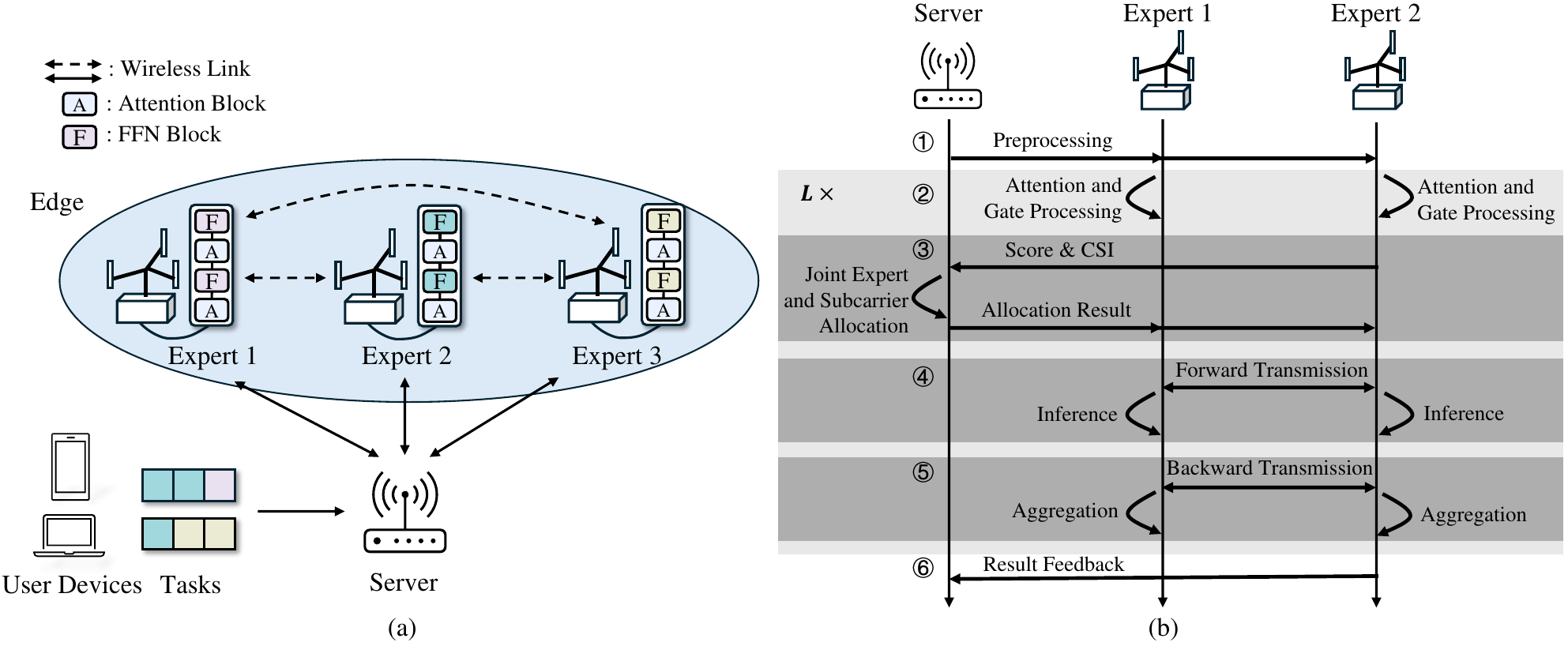}
    \caption{(a) The DMoE system; (b) The DMoE protocol.}
    \label{fig:system}
\end{figure*}

Consider the DMoE system depicted in Fig.~\ref{fig:system}(a). The system comprises $K$ expert nodes, an edge server, and multiple user devices. Users generate tasks with each task represented as a query of multiple tokens.
In this setup, users send queries to the server, which coordinates experts to collaboratively process the queries through expert inference and inter-expert transmission.
Experts are interconnected via device-to-device links.
Tokens processed by an expert become hidden states~\cite{fedus2022switch} and are directed to other suitable experts, leveraging their knowledge to assist inference. The results are then sent back to the originating expert for aggregation to produce an output that integrates multi-domain knowledge, in a Mixture-of-Experts manner. Multiple rounds are performed to traverse the entire global model.
Relevant models and metrics are introduced as follows.

\subsection{Communication Model}

In the DMoE system, two types of transmission are included: hidden state transmission between experts, and query uploading and result downloading between the server and the user.
Compared to the lightweight textual transmission in the server-user link, the high dimensionality of inter-expert hidden state transmission is the dominant communication overhead in the DMoE system, thus we only consider inter-expert transmissions.
We adopt OFDMA to facilitate multiaccess for selected experts in a single round. Considering a total of $K$ experts in the system, let $i = 1, 2, \ldots, K$ denote the sending expert, and $j = 1, 2, \ldots, K$ denote the receiving expert.
Note that when $i = j$, in-situ inference is performed, meaning the hidden states are processed locally without transmission overhead, making it the most efficient case. Since local computation does not incur communication costs, our focus is on optimizing inter-expert transmissions for $i \ne j$, which represent the primary source of communication overhead.
Let $B_0$ represent the subcarrier spacing, and $P_0$ the transmission power per subcarrier. Assume there are $M$ subcarriers available for allocation. The channel gain between expert $i$ and $j$ on subcarrier $m$ is denoted as $H_{ij}^{(m)}$. Interference is eliminated due to the exclusive subcarrier allocation. The maximum achievable data rate from expert $i$ to $j$ through subcarrier $m$ can be expressed as
\begin{equation}
r_{ij}^{(m)} = B_0 \log_2 \left( 1+\frac{H_{ij}^{(m)} P_0}{N_0} \right)
\end{equation}
where $N_0$ is the white Gaussian noise power.
To represent the utilization of subcarriers, we introduce the subcarrier assignment indicator $\beta_{ij}^{(m)}$ for $i \neq j$, where $\beta_{ij}^{(m)} = 1$ indicates that subcarrier $m$ is allocated to the transmission from expert $i$ to $j$, and $\beta_{ij}^{(m)} = 0$ otherwise.
The total maximum achievable data rate from expert $i$ to $j$ is given by
\begin{equation}
    R_{ij} = \sum_{m=1}^{M} \beta_{ij}^{(m)} r_{ij}^{(m)}.
\end{equation}

\subsection{Energy Consumption Models}

Let $s_0$ denote the size of one embedded hidden state for a token. For example, $s_0 = 8$ kB for a hidden state with 4096 dimensions and FP16 data format. The data size scheduled for transmission from expert $i$ to $j$ is $s_{ij} = s_0 \sum_{n=1}^{N_i} \alpha_{ij}^{(n)}$.
Then the communication energy~\cite{mecenergy} required to transmit all scheduled hidden states from expert $i$ to $j$ ($i \neq j$) is given by
\begin{equation}
\label{eq:ecomm}
    E_{ij}^{comm} = \frac{s_{ij}}{R_{ij}} \sum_{m=1}^M \beta_{ij}^{(m)} P_0.
\end{equation}
Previous studies~\cite{xugpunew} have profiled the energy consumption of GPUs when batch processing machine learning inference tasks. These profiling results show that energy consumption increases linearly with batch size, so the computation energy consumption of expert $j$ can be expressed as
\begin{equation}
    E_{j}^{comp} = a_{j} \sum_{i=1}^K s_{ij} + b_{j}
\end{equation}
where $a_{j} > 0$ and $b_{j} \geq 0$ are constants specific to device $j$.

\section{Overview of Distributed Mixture-of-Experts}

In this section, we begin by initializing the DMoE system and subsequently evaluating the diversity of expertise within the system. Following this, we present the DMoE protocol, which facilitates expert coordination and processes user tasks.

\subsection{System Initialization}
\label{sec:initial}

\begin{figure}[t]
    \centering
    \includegraphics[width=0.5\linewidth]{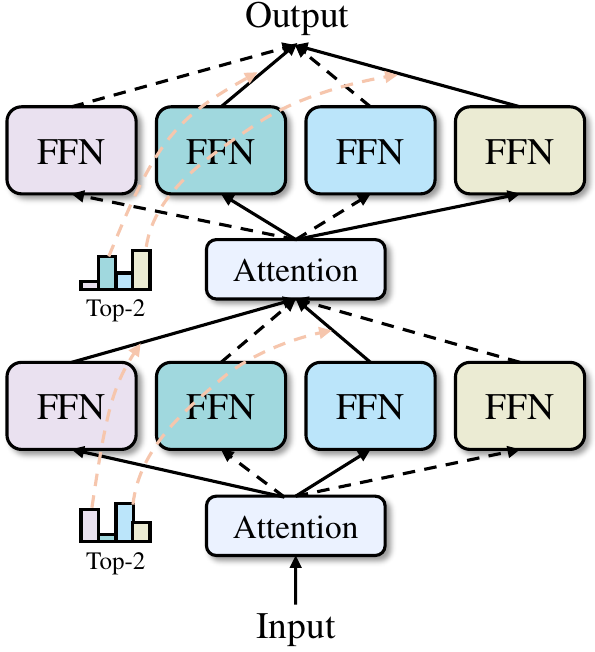}
    \caption{The architecture of a Mixture-of-Expert model.}
    \label{fig:llm_moe}
\end{figure}

The one-shot initialization is to distribute the MoE model to edge nodes for later inference. The initialization process consists of MoE vertical partitioning, block downloading, and expert model assembling.
\begin{enumerate}
    \item \textit{MoE Vertical Partitioning}:
    To initialize the DMoE system, a pre-trained MoE model is partitioned and distributed to the edge. Current state-of-the-art MoE models (e.g. Mixtral~\cite{jiang2024mixtral}, Deepseek MoE~\cite{liu2024deepseek}) all adopt a stacked Transformer architecture. The decoder-only MoE model mainly consists of $L$ stacked decoder layers, with layer index $l = 1, 2, \ldots, L$.
    As depicted in Fig.~\ref{fig:llm_moe}, each layer of the MoE model contains an attention block and several FFN blocks.
    With $\mathbf{Attn}^{(l)}$ denoting the attention block at layer $l$ and $\mathbf{FFN}^{(l)}_{i}$ the $i$-th FFN block at layer $l$, the MoE model can be expressed as
    \begin{equation}
        \begin{split}
            \mathbf{MoE} = & \left\{ \left[ \mathbf{Attn}^{(1)}, \left( \mathbf{FFN}^{(1)}_{1}, \mathbf{FFN}^{(1)}_{2}, \ldots \right) \right], \right. \\
            & \left. \left[ \mathbf{Attn}^{(2)}, \left( \mathbf{FFN}^{(2)}_{1}, \mathbf{FFN}^{(2)}_{2}, \ldots \right) \right], \ldots, \right. \\
            & \left. \left[ \mathbf{Attn}^{(L)}, \left( \mathbf{FFN}^{(L)}_{1}, \mathbf{FFN}^{(L)}_{2}, \ldots \right) \right] \right\}.
        \end{split}
    \end{equation}
    For edge distribution, the MoE model is partitioned into separate blocks, including the attention blocks $\mathbf{Attn}^{(l)}$, and the FFN blocks $\mathbf{FFN}^{(l)}_{i}$. The FFN blocks from all layers with the same index, e.g., $\mathbf{FFN}^{(1)}_{i}, \mathbf{FFN}^{(2)}_{i}, \ldots$, are grouped together with the shared attention blocks $\mathbf{Attn}^{(l)}$ as an expert. In this way, the MoE model is vertically partitioned into individual experts.

    \item \textit{Block Downloading}: After partitioning the MoE model into individual blocks, these blocks are transmitted to edge nodes.
    For node $i$, a set of blocks necessary for expert construction is downloaded, including the attention blocks $\mathbf{Attn}^{(l)}$ and the FFN blocks $\mathbf{FFN}^{(l)}_{i}$ for all layers $l$. These blocks are then stored locally on the edge. The downloading process is conducted in a one-shot manner, and the blocks are not updated during the inference stage. The gate function is also downloaded to the nodes.
    \item \textit{Expert Assembling}: The expert model is assembled by combining the attention block and the FFN block from each layer, and linking the layers sequentially. After the assembling, the expert model is ready for inference. Since all experts are partitioned from the same MoE model and constructed in the same way, the expert models share the same parameter structure.
    The assembled expert model on node $i$ can be expressed as
    \begin{equation}
    \begin{split}
        \mathbf{Expert}_{i} = & \left\{ \left[ \mathbf{Attn}^{(1)}, \mathbf{FFN}^{(1)}_{i} \right], \right.\\
        & \left. \left[ \mathbf{Attn}^{(2)}, \mathbf{FFN}^{(2)}_{i} \right], \ldots, \right. \\
        & \left. \left[ \mathbf{Attn}^{(L)}, \mathbf{FFN}^{(L)}_{i} \right] \right\}.
    \end{split}
    \end{equation}
\end{enumerate}

\begin{Remark}[Distribution Methods]
\emph{There are several methods to distribute MoE to the edge. In \cite{xue2024wdmoe}, the attention and FFN blocks are separately stored on the server and edge nodes. 
This approach introduces frequent communication from the server to edge nodes, as the attention results at the server need to be transmitted to the edge.
In comparison, the proposed distribution method assigns a whole set of attention and FFN blocks to an edge node to form an expert. By eliminating server-edge hidden state transmissions, our approach significantly reduces communication overhead.}
\end{Remark}

\subsection{Expertise Diversity}

\begin{figure}[t]
    \centering
    \includegraphics[width=0.8\linewidth]{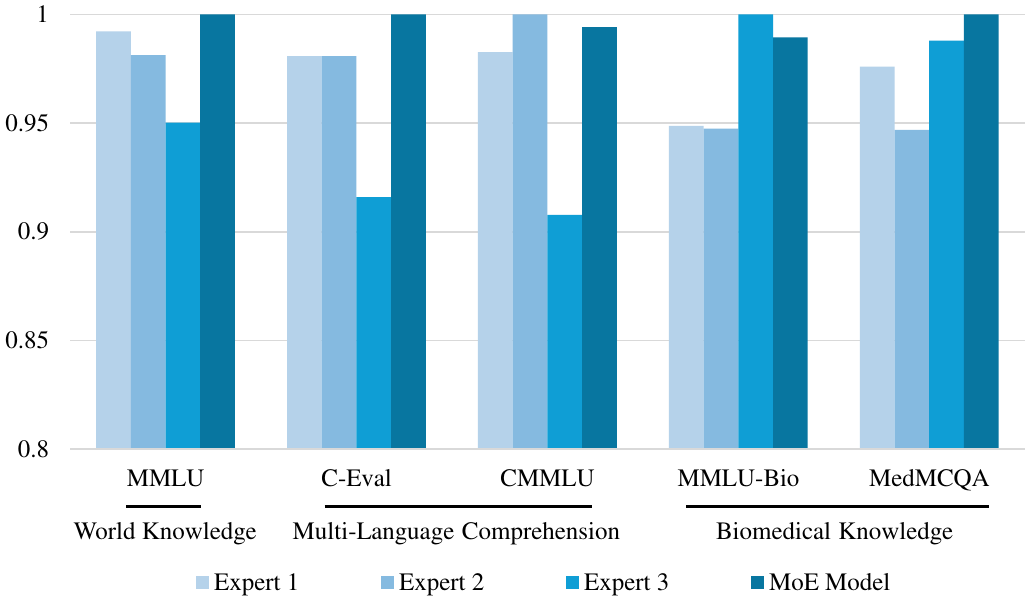}
    \caption{The normalized performance of the MoE model and individual experts across multi-domain tasks. The expertise diversity is observed.}
    \label{fig:diversity}
\end{figure}

We designed an experiment to assess whether the experts constructed from the original MoE model could inherit multi-domain specialization and exhibit expertise diversity on diverse tasks. For this experiment, we select Llama-3 series fine-tuned expert models, including Llama-3-8B-Instruct~\cite{llama3instruct} for general knowledge, Llama3-8B-Chinese-Chat~\cite{llama3chinese} for Chinese Q\&A, and Llama3-OpenBioLLM-8B~\cite{llama3bio} for medical-domain tasks. All models are available on Hugging Face. The gates are derived using the positive/negative prompt method.
The performance of the MoE model and individual experts is presented in Fig.~\ref{fig:diversity}. The result shows that the constructed experts successfully acquire specialized abilities from the MoE model, exhibiting different levels of specialization in general world knowledge (MMLU~\cite{hendrycks2021mmlu}), Chinese language comprehension (C-Eval~\cite{huang2024ceval} and CMMLU), and biomedical expertise (MMLU-Bio~\cite{hendrycks2021mmlu} and MedMCQA~\cite{pmlr-v174-pal22a-medmcqa}).
This experiment demonstrates that the partitioned experts have heterogeneous performance and need careful expert selection to match the task diversity.


\subsection{DMoE Protocol}

\begin{figure}[t]
    \centering
    \includegraphics[width=0.5\linewidth]{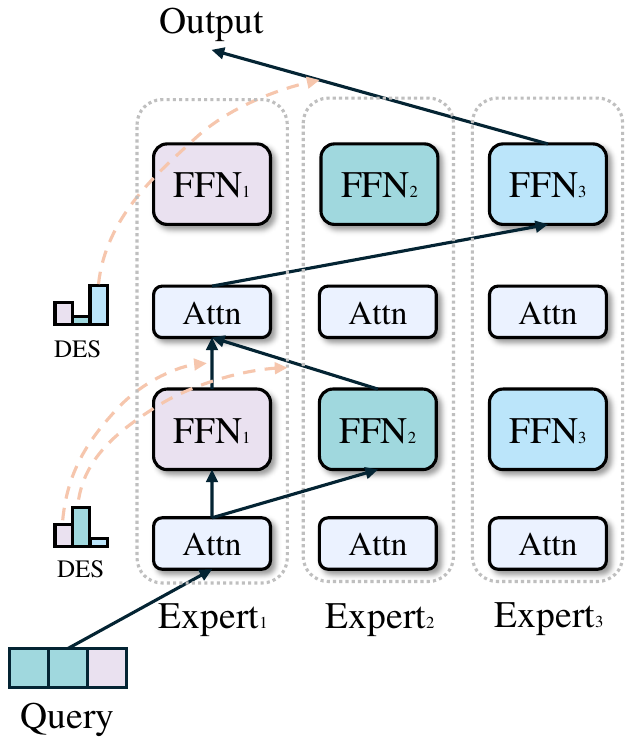}
    \caption{The expert selection process in DMoE.}
    \label{fig:compare}
\end{figure}

After initializing the experts, we propose the DMoE protocol to coordinate experts and process user tasks. As shown in Fig.~\ref{fig:system}(b), DMoE protocol mainly consists of attention and gate processing, joint expert and subcarrier allocation, forward transmission and inference, and backward transmission. This process repeats $L$ times for the hidden states to go through each layer. The detailed procedure of the DMoE protocol is listed as follows.

\begin{enumerate}
    \item \textit{Preprocessing}: The users first upload queries to the server. The server assigns and transmits the queries exclusively to the experts, with each expert assigned at most one query. Assume expert $i$ is assigned a query containing $N_i$ tokens. The tokens are then embedded into hidden states for further processing, with each corresponding to one token.
    \item \textit{Attention and Gate Processing}: The hidden states go through the attention block to extract features and check dependencies between tokens, allowing it to “attend” to the most relevant parts of the input~\cite{jacobs1991adaptivemoe,fedus2022switch,lepikhin2021gshard}.
    After attention processing, we denote the $n$-th hidden state of expert $i$ as $u_{i}^{(n)}$, with $\mathbf{\mathcal{U}}_{i}=\left \{u_{i}^{(1)},u_{i}^{(2)},...,u_{i}^{(N_i)} \right \}$ denoting all hidden states corresponding to the query.
    Then the hidden states are sent to the gate function.
    The gate function gives scores to estimate the performance of each expert with respect to each hidden state.
    With $g_{j}^{(l)}\left( u_{i}^{(n)} \right)$ regarded as the score of expert $j$ at layer $l$ processing $u_{i}^{(n)}$, the gate function of layer $l$ is defined as
    \begin{equation}
    \begin{split}
    \mathbf{g}^{(l)} \left(u_{i}^{(n)} \right) = & \left[ g_{1}^{(l)}\left( u_{i}^{(n)} \right),  g_{2}^{(l)}\left( u_{i}^{(n)} \right), \right. \\
    & \left. \ldots,  g_{j}^{(l)}\left( u_{i}^{(n)} \right),  \ldots, g_{K}^{(l)}\left( u_{i}^{(n)} \right) \right],
    \end{split}
    \end{equation}
    where $g_{j}^{(l)}\left( u_{i}^{(n)} \right) \geq 0$, $\forall i$ and $\sum_{j=1}^K g_{j}^{(l)}\left( u_{i}^{(n)} \right) = 1$.
    \item \textit{Joint Expert and Subcarrier Allocation}: The experts first upload the gating scores and inter-expert channel state information to the server. The server performs joint expert and subcarrier allocation to decide the expert assignment for hidden states $u_i^{(n)}$, and the channel usage to facilitate the inter-expert transmissions. For each $u_i^{(n)}$, a certain number of experts are selected to process it.
    We introduce the expert selection indicator $\alpha_{ij}^{(n)}$ to represent the expert allocation, where $\alpha_{ij}^{(n)} = 1$ indicates that expert $j$ is selected for $u_{i}^{(n)}$, and $\alpha_{ij}^{(n)} = 0$ indicates it is not.
    Then this step becomes deciding $\alpha_{ij}^{(n)}$ and $\beta_{ij}^{(m)}$.
    The Top-k is a commonly adopted expert selection strategy in centralized MoE focusing solely on the gating score. In DMoE, however, we aim to minimize the energy consumption while enforcing task-relevance requirements. We propose the Dynamic Expert Selection (DES) based on the task-relevance (i.e., the gating score) and the channel conditions. For subcarrier allocation, the server assign several subcarriers to a link between experts for transmission. The subcarrier allocation is to minimize the energy consumption of the inter-expert transmission. The expert and subcarrier allocation results are then sent back to the experts.
    \item \textit{Forward Transmission and Inference}: The hidden states are transmitted from the source expert to the selected expert according to the selection result. The selected experts leverage the FFN blocks to process hidden states from all requesting experts, resulting in updated hidden states of the same size.
    The domain-specific knowledge is mainly stored in the FFN blocks~\cite{jacobs1991adaptivemoe,fedus2022switch,lepikhin2021gshard}, thus combining the results from multiple experts can enhance the inference performance.
    Assuming the FFN block of expert $j$ at layer $l_i$ is represented by the function $\mathbf{FFN}_{j}^{(l_i)}(\cdot)$, the inference result of $u_{i}^{(n)}$ can be expressed as $\mathbf{FFN}_{j}^{(l_i)} \left( u_{i}^{(n)} \right)$.
    The updated hidden states have gained domain-specific knowledge from the experts.
    \item \textit{Backward Transmission and Aggregation}: The updated hidden states are transmitted back to the source expert. The results are then aggregated with the gating scores.
    The aggregation result of $u_{i}^{(n)}$ can be expressed as
    \begin{equation}
    \sum_{j=1}^K \frac {\alpha_{ij}^{(n)} g_{j}^{(l_i)}\left( u_{i}^{(n)} \right)}{\sum_{j=1}^K \alpha_{ij}^{(n)} g_{j}^{(l_i)}\left( u_{i}^{(n)} \right)} \mathbf{FFN}_{j}^{(l_i)} \left( u_{i}^{(n)} \right).
    \end{equation}
    \item \textit{Result Feedback}: Steps 2-5 are repeated for $L$ rounds until all layers are traversed. The final results are sent back to the user devices.

\end{enumerate}

\section{Problem Formulation}

In this section, we address two key problems to minimize the energy consumption in the DMoE protocol. We first formulate the expert selection problem, focusing solely on optimizing the expert allocation variables $\alpha_{ij}^{(n)}$. Furthermore, we extend to the JESA problem, aiming to derive optimal solutions for both the expert allocation $\alpha_{ij}^{(n)}$ and the subcarrier allocation $\beta_{ij}^{(m)}$.

\subsection{Expert Selection Problem Formulation}

\begin{figure}[t]
    \centering
    \includegraphics[width=0.8\linewidth]{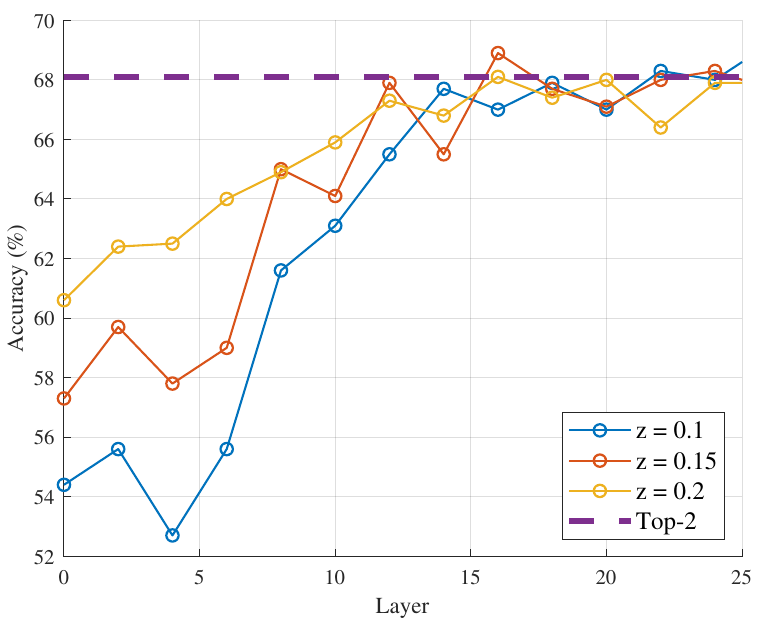}
    \caption{The final accuracy versus the starting layer index with lowered QoS requirement.}
    \label{fig:layer}
\end{figure}

We first propose a method for estimating task-relevance in DMoE. Since a higher gating score indicates better performance, for one $u_{i}^{(n)}$, the task-relevance of an expert combination can be estimated by the sum of gating scores: $\sum_{j=1}^K \alpha_{ij}^{(n)} \cdot g_{j}^{(l_i)} \left( u_{i}^{(n)} \right)$.
For the importance of gating scores across layers, extensive experiments have shown that the layers contribute differently to the final performance. Lower layers are more critical for accuracy, while upper layers allow greater flexibility.
The experiment is conducted by adjusting QoS thresholds across layers, which is to ensure the sum of gating scores meets or exceeds a specified threshold $z$. To determine the layer importance, we set a lower $z$ in 4 consecutive layers starting from a layer, while maintaining $z = 0.5$ in the remaining layers. Results in Fig.~\ref{fig:layer} validate our assumption.
To quantify this property, we introduce an importance factor $\gamma^{(l)}$ for each layer, assuming it is non-increasing: $\gamma^{(l)} \geq \gamma^{(l+1)}, \forall l$.
Assume that the QoS requirement of the query is that the expected performance of DMoE on each token must meet or exceed $z$ multiplies the layer importance factor $\gamma^{(l_i)}$, which can be expressed as
\begin{equation*}
    (\text{C1}) \quad \sum_{j=1}^K \alpha_{ij}^{(n)} g_{j}^{(l_i)} \left( u_{i}^{(n)} \right) \geq z \gamma^{(l_i)}.
\end{equation*}
Previous research and practices on MoE~\cite{fedus2022switch,lepikhin2021gshard} demonstrate that the performance does not increase monotonically with the number of experts selected. Instead, the performance increases initially with the rise of the number of experts but decreases afterward. Hence, we propose a constraint on the maximum number of experts as
\begin{equation*}
    (\text{C2}) \quad \sum_{j=1}^K \alpha_{ij}^{(n)} \leq D.
\end{equation*}
For the expert selection problem, expertise diversity is included in C1 while channel diversity is included in $R_{ij}$. We aim to minimize the total energy consumption from both communication and computation. Considering the expert allocation $\alpha_{ij}^{(n)}$ and assuming $R_{ij}$ in (\ref{eq:ecomm}) is known, we formulate the expert selection problem with QoS requirements and the expert number constraint, which can be expressed as:
\begin{equation*}
    (\text{P1}) \quad 
    \begin{alignedat}{2}
        \min_{\alpha_{ij}^{(n)}} \quad & \left( \sum_{i=1}^K \sum_{j=1, j \ne i}^K E_{ij}^{comm} + \right. && \left. \sum_{j=1}^K E_{j}^{comp} \right)\\
        \text{s.t.} \quad & \text{C1, C2} && \forall i,n,\\
        &\alpha_{ij}^{(n)} \in \{0,1\} && \forall i,j,n.
    \end{alignedat}
\end{equation*}



\begin{Remark}
    \label{remark}
    \emph{Certain $u_{i}^{(n)}$ may not satisfy C1 and C2 simultaneously. This occurs when the sum of the Top-$D$ experts cannot meet the required score $z \gamma^{(l_i)}$, rendering the problem infeasible. In such cases, we let them select the Top-$D$ experts.}
\end{Remark}


\subsection{Joint Expert and Subcarrier Allocation Problem Formulation}

For the ultimate problem of JESA, our goal is to select experts for hidden states and also allocate the appropriate subcarriers to facilitate these transmissions, to minimize the overall energy consumption from both transmission and computation in the DMoE system. Recalling the subcarrier allocation indicator $\beta_{ij}^{(m)}$, the exclusive subcarrier allocation constraint can be expressed as
\begin{equation*}
    (\text{C3}) \quad \sum_{i=1}^K \sum_{j=1, j \ne i}^K \beta_{ij}^{(m)} \leq 1.
\end{equation*}
The JESA problem is formulated under the constraints in P1 and the subcarrier availability constraint, as follows:
\begin{equation*}
(\text{P2}) \quad
    \begin{alignedat}{2}
    \min_{\alpha_{ij}^{(n)}, \beta_{ij}^{(m)}} & \left( \sum_{i=1}^K \sum_{j=1, j \ne i}^K E_{ij}^{comm} + \right. && \left. \sum_{j=1}^K E_{j}^{comp} \right)\\
    \text{s.t.} \quad &\text{C1, C2} && \forall i,n, \\
    &\text{C3} && \forall m,\\
    &\alpha_{ij}^{(n)}, \beta_{ij}^{(m)} \in \{0,1\} && \forall i,j,n,m.
    \end{alignedat}
\end{equation*}

\section{Optimal Expert Selection}

In this section, we address the fusion of expertise and channel diversity in expert selection. We first transform Problem P1 into a simplified one and then propose the optimal expert selection algorithm DES. DES features a linear relaxed lower bound as the bounding criterion for tree search, where the experts are excluded greedily based on the energy-to-score ratio. The bounding criterion significantly reduces the complexity of the problem.

\subsection{Problem Reformulation}

In P1, summations over $i$, $j$ and $n$ are involved. To simplify the expression, let $e_{ij} = s_0 \left( a_j + P_0 \sum_{m=1}^M \beta_{ij}^{(m)} / R_{ij} \right)$ for $i \ne j$ and $e_{jj} = s_0 a_j$. Note that
\begin{equation}
    \min_{\alpha_{ij}^{(n)}} \sum_{i=1}^K \sum_{j=1}^K \sum_{n=1}^{N} e_{ij} \alpha_{ij}^{(n)} \geq
    \sum_{i=1}^K \sum_{n=1}^{N} \left( \min_{\alpha_{ij}^{(n)}} \sum_{j=1}^K e_{ij} \alpha_{ij}^{(n)} \right),
\end{equation}
where equality holds if and only if all terms $\sum_{j=1}^K e_{ij} \alpha_{ij}^{(n)}$ reach their minimum values simultaneously, which can be achieved since $\alpha_{ij}^{(n)}$ for different $i$ and $n$ are independent. Thus, we can reformulate the problem as a single minimization problem for a given $i$ and $n$:
\begin{equation*}
(\text{P1(a)}) \quad
    \begin{aligned}
    \min_{\alpha_{ij}^{(n)}} \quad & \sum_{j=1}^K e_{ij} \alpha_{ij}^{(n)} \\
    \text{s.t.} \quad & \text{C1, C2}, \\
    &\alpha_{ij}^{(n)} \in \{0,1\} \quad \forall j.
    \end{aligned}
\end{equation*}
Problem P1 can be viewed as solving P1(a) for all possible $i$ and $n$. However, the combinatorial nature of the problem makes finding the optimal solution challenging. The following proposition, proved in Appendix~\ref{appx:nphard} by reducing the knapsack problem to P1(a), suggests that no known polynomial-time algorithm exists unless $\text{P} = \text{NP}$.

\begin{Proposition}
\label{prop:nphard}
\emph{Problem P1(a) is NP-hard.}
\end{Proposition}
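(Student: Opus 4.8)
The plan is to establish NP-hardness of P1(a) via a polynomial-time reduction from the classical 0-1 knapsack problem, whose decision version is well known to be NP-complete. First I would streamline notation: since $i$ and $n$ are held fixed throughout P1(a), I would write $x_j := \alpha_{ij}^{(n)}$ for the binary selection variables, $c_j := e_{ij} > 0$ for the per-expert costs, $\omega_j := g_{j}^{(l_i)}(u_{i}^{(n)}) \geq 0$ for the gating scores, and $T := z\gamma^{(l_i)}$ for the QoS threshold. In this form P1(a) reads as a minimum-cost covering problem: minimize $\sum_{j} c_j x_j$ subject to the covering constraint $\sum_{j} \omega_j x_j \geq T$ (C1) and the cardinality cap $\sum_{j} x_j \leq D$ (C2). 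Its decision version asks, given a budget $B$, whether a feasible selection of cost at most $B$ exists.

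The core idea is a complementation argument that converts the value-maximizing packing structure of knapsack into the cost-minimizing covering structure of P1(a). Given a 0-1 knapsack instance with items $1,\ldots,K$, integer weights $w_j$, values $v_j$, capacity $C$, and target value $V$, I would construct a P1(a) instance on $K$ experts by setting $c_j = v_j$, $\omega_j = w_j$, threshold $T = W_{\mathrm{tot}} - C$ with $W_{\mathrm{tot}} = \sum_{j} w_j$, cardinality cap $D = K$ (so that C2 is vacuous), and cost budget $B = V_{\mathrm{tot}} - V$ with $V_{\mathrm{tot}} = \sum_{j} v_j$. This mapping is clearly computable in polynomial time, and we may assume $C < W_{\mathrm{tot}}$ since otherwise the knapsack instance is trivial.

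Correctness follows by passing to complements. For any subset $S \subseteq \{1,\ldots,K\}$ with complement $\bar S$, one has $\sum_{j \in S} w_j \leq C \iff \sum_{j \in \bar S} \omega_j \geq T$ and $\sum_{j \in S} v_j \geq V \iff \sum_{j \in \bar S} c_j \leq B$. Hence $S$ is a feasible knapsack solution (weight at most $C$, value at least $V$) if and only if $\bar S$, viewed as an expert selection, satisfies C1 and has cost at most $B$, while C2 holds automatically because $D = K$. Therefore the knapsack instance is a yes-instance exactly when the constructed P1(a) instance is, which establishes the reduction and hence NP-hardness of P1(a).

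I expect the main obstacle to be bookkeeping the directions of the two inequalities under complementation and choosing $T$ and $B$ so that both flip consistently; the reduction hinges on the fact that minimizing cost over a covering set is the same as maximizing the excluded cost subject to a packing constraint. A secondary point worth stating explicitly is that the cardinality constraint C2 must be neutralized by taking $D = K$, so that the embedded instances are genuine P1(a) instances rather than a strictly smaller subclass, thereby ensuring that the hardness transfers to P1(a) exactly as stated.
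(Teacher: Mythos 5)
Your proof is correct and follows essentially the same route as the paper: a polynomial-time reduction from 0-1 knapsack in which the packing structure is flipped into the covering structure of P1(a) by complementation (the paper's substitution $\xi_i = 1 - x_i$), knapsack values become expert costs, weights become gating scores, and the capacity constraint becomes the QoS constraint C1 with threshold $\sum_j w_j - C$. The only differences are presentational --- you phrase the reduction between decision versions with an explicit budget $B$, and you neutralize C2 by setting $D = K$ in the constructed instance, which is a cleaner rendering of the paper's argument that taking $D \rightarrow \infty$ makes C2 vacuous.
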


\begin{proof}
See Appendix~\ref{appx:nphard}.
\end{proof}

\subsection{Search Tree Construction and Breadth-First Search}

Given the NP-hardness of expert selection, directly searching for the optimal solution incurs an exponential time complexity of $O \left( 2^K \right)$, which makes it intractable with a large number of experts like DeepSeek-V3 with $K = 256$~\cite{liu2024deepseek}.
We propose a time-efficient method by representing the solution space of the P1(a) as a binary search tree.
A node $v$ contains five elements $(j,t,e,P_\text{exc},P_\text{inc})$. $j$ is the next expert to be considered to be included or excluded, $t$ is the accumulated score, $e$ is the current energy, and $P_\text{exc}$ and $P_\text{inc}$ are two sets storing the past excluded and included experts respectively.
At each level of the tree, a decision is made for one expert: to exclude it in the final solution or not. Let there be $K$ experts, each with a score $t_j$ and energy $e_j$, and the QoS requirement is $z \gamma^{(l)}$. To construct the search tree, we create the root representing the initial state with no experts explicitly considered, thus $t = \sum_{j=1}^K g_{j}^{(l)} \left( u \right) = 1$ and $e = \sum_{j=1}^K e_j$. $P_\text{exc}$ and $P_\text{inc}$ are empty sets. The next expert is $j=1$, hence $v_0 = (1, 1, \sum_{j=1}^K e_j, \emptyset, \emptyset)$. For branching, at each node corresponding to the $j$-th expert, two child nodes are generated. The left child excludes expert $j$, and the total score and energy are updated: $t \leftarrow t - t_j$, $e \leftarrow e - e_j$. The right child includes expert $j$, thus the total value and weight remain unchanged: $t \leftarrow t$, $e \leftarrow e$.




To systematically explore the search tree, we employ a breadth-first search (BFS) traversal strategy.
To determine whether the child nodes are feasible, two criteria corresponding to C1 and C2 in P1(a) are employed. Child nodes are checked if their score $t \geq z \gamma^{(l)}$ and the number of included experts $|P_\text{inc}| \leq D$ before enqueueing. Then the BFS traversal proceeds as follows.
For initialization, BFS starts with the root node in the queue and sets the global minimum energy $e_\text{min} \leftarrow \sum_{j=1}^K e_j$ and the corresponding excluded expert set $P_\text{min} \leftarrow \emptyset$. While the queue is not empty, the following steps are executed. First, dequeue the front node and get $(j,t,e,P_\text{exc},P_\text{inc})$. Second, compare its total energy $e$ with the best-known $e_\text{min}$. If better, update $e_\text{min}$ and $P_\text{min}$. Third, generate its child nodes to exclude and include the next expert and check if $t \geq z \gamma^{(l)}$ and $|P_\text{inc}| \leq D$. At last, enqueue the feasible child nodes.

\subsection{Bounding Criterion}

We introduce an additional bounding criterion to optimize tree search by calculating the energy lower bound for further nodes.
In our approach, we relax the binary constraint on $\alpha_{ij}^{(n)}$ and C2 of P1(a), transforming the problem into a linear programming (LP) formulation, expressed as:
\begin{equation*}
(\text{P1(b)}) \quad
    \begin{alignedat}{2}
    \min_{x_j} \quad & \sum_{j=j'}^K e_{j} x_j \\
    \text{s.t. } \quad & \sum_{j=j'}^K t_j x_j \geq z \gamma^{(l)}, \\
    & x_j \in [0,1] && \forall j.
    \end{alignedat}
\end{equation*}
Introducing $\lambda$ as the Lagrange multiplier, P1(b) can be reformulated as:
\begin{equation*}
(\text{P1(c)}) \quad
    \begin{alignedat}{2}
    \min_{x_j, \lambda} \quad & f\left( x_j, \lambda \right) = \sum_{j=j'}^K e_{j} x_j && + \lambda \left( \sum_{j=j'}^K t_j x_j - z \gamma^{(l)} \right) \\
    \text{s.t.} \quad & x_j \in [0,1] && \forall j.
    \end{alignedat}
\end{equation*}
The optimal $x_j^*$ is found by solving $\left. \frac{\partial f}{\partial x_j} \right|_{x_j = x_j^*} = 0$, where the partial derivative is:
\begin{equation}
    \left. \frac{\partial f}{\partial x_j} \right|_{x_j = x_j^*} = e_j + \lambda t_j
    \begin{cases}
        > 0, & \text{if } \lambda > - e_j / t_j, \\
        = 0, & \text{if } \lambda = - e_j / t_j, \\
        < 0, & \text{if } \lambda < - e_j / t_j.
    \end{cases}
\end{equation}

Since the derivative is independent of $x_j$, the optimal value either occurs at the boundaries of $\{0,1\}$ in the feasible region or the derivative equals $0$, and hence the optimal solution is obtained inside the feasible region. The optimal value of $x_j$ depends on the relationship between $\lambda$ and $- e_j / t_j$, which represents the energy-to-score ratio of each expert. To meet the QoS requirement in P1(b), $\lambda$ will correspond to a critical expert: including this expert satisfies the QoS, but excluding it does not. Then the critical expert $j_\text{crit}$ can be identified by greedily excluding experts until the QoS limit is reached.
Assuming the experts are sorted in descending energy-to-score ratio, the optimal solution is
\begin{equation}
x_j = \begin{cases}
1, & \text{if } j < j_\text{crit}, \\
\left(z \gamma^{(l)} - \sum_{j=j'}^{j_\text{crit} - 1} t_j \right) / t_j, & \text{if } j = j_\text{crit}, \\
0, & \text{if } j > j_\text{crit}.
\end{cases}
\label{eq:optimal_x}
\end{equation}

By greedily excluding experts until the QoS limit is reached, we derive a lower bound. If this bound exceeds the current minimum energy found, it indicates that further search will not yield a better solution. The bounding process includes:
\begin{enumerate}
    \item \textbf{Sort Experts}: Experts are sorted in descending order of energy-to-score ratio $e_j / t_j$;
    \item \textbf{Compute Bound}: Starting from the current node, experts are added fractionally to maximize total value without being lower than the QoS requirement $z$. The lower bound $e_\text{bound}$ is calculated as:
\begin{equation}
   e_\text{bound}  = e - \sum_{j=j'}^K e_j x_j,
\end{equation}

where $x_j$ is the fraction of expert $j$ included, defined in (\ref{eq:optimal_x});

    \item \textbf{Pruning Decision}: After computing the upper bound, the node is evaluated:

\begin{itemize}
    \item \textbf{Continue Exploration}: If $e_\text{bound} < e_\text{min}$, the node is promising, and its child nodes are enqueued for exploration.
    \item \textbf{Prune Node}: If $e_\text{bound} \geq e_\text{min}$, the node is not promising, and it, along with its descendants, is discarded.
\end{itemize}

\end{enumerate}
This pruning significantly reduces the number of nodes to be explored, enhancing the algorithm's efficiency. 
The detailed implementation of optimal expert selection is provided in Algorithm~\ref{algo:expert}.


\begin{algorithm}[t]
\caption{Optimal Expert Selection (DES)}
\label{algo:expert}
\textbf{Input:} The maximum achievable data rate $R_{ij}$, the gating scores $g_{j}^{(l_i)} \left( u_i^{(n)} \right)$, the QoS requirement $z \gamma^{(l)}$, the maximum number of experts $D$, the computation energy $a_j$;\\
\textbf{Initialize:} $\alpha_{ij}^{(n)} \leftarrow 0, Q \leftarrow \text{empty queue}, t_j \leftarrow g_{j}^{(l_i)} \left( u_i^{(n)} \right), e_j \leftarrow a_j + P_0 s_0 / R_{ij}, e_{\text{min}} \leftarrow \sum_{j=1}^K e_j, P_{\text{min}} \leftarrow \emptyset$; \\


Sort experts by $e_j / t_j$ in descending order; \\
$Q. \text{enqueue} ((1, 1, e_{\text{min}}, \emptyset, \emptyset))$; \\

\textbf{while} $Q \ne \emptyset$ \textbf{do}\\
\quad $(j, t, e, P_{\text{exc}}, P_{\text{inc}}) \leftarrow Q$.dequeue(); \\

\quad \textbf{if} $w \geq z \gamma^{(l)}$ \textbf{and} $e < e_{\text{min}}$ \textbf{and} $|P_{\text{exc}}| \geq K - D$ \textbf{then} \\
\quad \quad $e_{\text{min}} \leftarrow e, P_{\text{min}} \leftarrow P_{\text{exc}}$; \\

\quad \textbf{if} $w < z \gamma^{(l)}$ \textbf{or} $j > K$ \textbf{then} \\
\quad \quad \textbf{continue} \\

\quad $e_{\text{bound}} \leftarrow \text{bound} ((j, t, e), z, \mathbf{t}, \mathbf{e})$; \\

\quad \textbf{if} $e_{\text{bound}} \leq e_{\text{min}}$ \textbf{then} \\
\quad \quad $Q. \text{enqueue} ((j + 1, t - t_j, e - e_j, P_{\text{exc}} \cup \{ j \}, P_{\text{inc}}))$; \\

\quad \quad \textbf{if} $|P_{\text{inc}}| < D$ \textbf{then} \\
\quad \quad \quad $Q. \text{enqueue} ((j + 1, t, e, P_{\text{exc}}, P_{\text{inc}} \cup \{ j \}))$; \\

Retrieve the excluded experts from $P_{\text{min}}$, set $\alpha_{ij}^{(n)} \leftarrow 0$, and set $\alpha_{ij}^{(n)} \leftarrow 1$ for the remaining experts;\\

\textbf{return} $\alpha_{ij}^{(n)}$\\

\textbf{function} bound$((j, t, e), z, \mathbf{t}, \mathbf{e})$\\
\quad \textbf{if} $t \leq z$ \textbf{then} \\
\quad \quad \textbf{return} $0$ \\

\quad \textbf{while} $j \leq K$ \textbf{and} $s - s_j \geq z$ \textbf{do} \\

\quad \quad $t \leftarrow t - t_j, e \leftarrow e - e_j$; \\
\quad \quad $j \leftarrow j + 1$; \\

\quad \textbf{if} $j \leq K$ \textbf{then}\\
\quad \quad $e \leftarrow e - (z - t) e_j / t_j$; \\

\textbf{return} $e$ \\
\end{algorithm}

\section{Joint Expert and Subcarrier Allocation}

In expert selection, we get insights from handling expertise and channel diversity integration. In this section, we extend the problem to JESA, where the expert selection and subcarrier allocation are jointly considered to minimize the overhead. We first derive the optimal subcarrier allocation algorithm and then integrate it with the expert selection algorithm to solve the JESA problem. A special structure of the JESA problem is exploited to ensure the near-optimal performance of the proposed algorithm.

\subsection{Optimal Subcarrier Allocation}

We derive a sub-problem of P1 for subcarrier allocation determining $\beta_{ij}^{(m)}$ after the expert selection $\alpha_{ij}^{(n)}$ has been decided. Since the computation energy is fixed once the expert allocation is known, our goal turns to minimize the communication energy only, which can be expressed as:
\begin{equation*}
(\text{P3}) \quad
\begin{alignedat}{2}
    \min_{\beta_{ij}^{(m)}} \quad & \sum_{i=1}^K \sum_{j=1, j \ne i}^K E_{ij}^{comm} \\
    \text{s.t.} \quad & \text{C3} && \forall m,\\
    &\beta_{ij}^{(m)} \in \{0,1\} && \forall i,j,m.
\end{alignedat}
\end{equation*}
Appendix~\ref{appx:sa} shows P3 can be transformed into an assignment problem. Several algorithms can be applied to obtain the optimal solution~\cite{wang2020convergence,mecsurvey}. When using the Kuhn-Munkres algorithm with Fibonacci heaps~\cite{kuhn1955hungarian}, the time complexity is $O(M^2K(K-1) + M^2 \log M)$, and for a balanced assignment the complexity is $O(M^3)$.

\subsection{JESA via Block Coordinate Descent}

We have already derived the optimal solutions for two lower-level sub-problems, P1 for expert selection and P3 for subcarrier allocation. To address the upper-level JESA Problem P2, we employ the block coordinate descent (BCD) approach.
BCD is a well-established iterative approach to exactly or approximately solve a large number of optimization problems by using a series of simple, coordinate-wise updates. While BCD is not generally guaranteed to converge to the global optimum, we identify a unique structure of the JESA problem that ensures BCD achieves near-optimal performance. This is formalized in the following theorem:

\begin{Theorem}
\label{theo:optim}
\emph{Let the optimal solution of P2 be $\alpha^*$ and $\beta^*$, and the solution produced by Algorithm~\ref{algo:bcd} be $\alpha$ and $\beta$. Assume $r_{ij}^{(m)}$ for different $i,j,m$ are independent and identically distributed (i.i.d.). Then the probability that the algorithm finds the optimal solution is bounded by
\begin{equation}
\Pr \left( \alpha = \alpha^*, \beta = \beta^* \right) \geq \frac{\prod_{i=0}^{K(K-1)-1}(M-i)}{M^{K(K-1)}}.
\end{equation}}
\end{Theorem}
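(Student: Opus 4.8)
The plan is to isolate a single favorable event on the channel realizations under which the two subproblems that drive the BCD iteration, namely expert selection (P1) and subcarrier allocation (P3), fully decouple, so that BCD is guaranteed to reach the global optimum of P2; the stated bound then emerges as exactly the probability of that event, and the inequality is a lower bound because the event is sufficient but not necessary for optimality.

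First I would analyze the structure of the communication-energy objective in P3. For any active link $(i,j)$ occupying a subcarrier set $S_{ij}$, the term $E_{ij}^{comm}$ is proportional to $|S_{ij}| / \sum_{m \in S_{ij}} r_{ij}^{(m)}$, i.e., the reciprocal of the average per-subcarrier rate over $S_{ij}$. Since the average over any set never exceeds its largest element, this ratio is minimized by the singleton $S_{ij} = \{m_{ij}^*\}$ with $m_{ij}^* = \arg\max_m r_{ij}^{(m)}$. Hence, in the absence of contention, each link attains its individual minimum communication energy by occupying its own single best subcarrier.

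Next I would define the favorable event $\mathcal{E}$ that the $K(K-1)$ best subcarriers $\{m_{ij}^* : i \ne j\}$ are pairwise distinct, and argue that under $\mathcal{E}$ the problem separates. When all best subcarriers differ, the exclusivity constraint C3 is automatically satisfied if every link takes its own best subcarrier, so the per-link minimum energies derived above are simultaneously achievable irrespective of which links are active. Consequently the coefficients $e_{ij}$ feeding the expert-selection problem are frozen at their global minima and no longer depend on $\beta$; P2 collapses into solving P1 once with these fixed energies (yielding $\alpha^*$ via DES) followed by the trivial best-subcarrier assignment (yielding $\beta^*$). The BCD alternation therefore reaches this joint optimum and, with no residual coupling, remains there as a fixed point, giving the inclusion $\mathcal{E} \subseteq \{\alpha = \alpha^*, \beta = \beta^*\}$.

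Finally I would compute $\Pr(\mathcal{E})$. Because the rates $r_{ij}^{(m)}$ are i.i.d. across all $(i,j,m)$ (and continuous, so ties occur with probability zero), within each link the index of the maximal rate is uniform on $\{1, \ldots, M\}$, and the maxima of distinct links are functions of disjoint collections of the random variables, making the $K(K-1)$ indices $m_{ij}^*$ i.i.d. uniform. The probability that $K(K-1)$ such draws are all distinct is the classical birthday non-collision probability $\prod_{i=0}^{K(K-1)-1}(M-i) / M^{K(K-1)}$, which combined with the inclusion above yields the claimed bound. The main obstacle is the decoupling argument of the third paragraph: one must verify carefully that freezing the energies at their per-link minima is valid for every candidate active set and that the subcarrier step of Algorithm~\ref{algo:bcd} indeed realizes the best-subcarrier assignment, so that no suboptimal fixed point can trap the iteration under $\mathcal{E}$; the probabilistic computation is then routine.
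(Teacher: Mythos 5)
Your proposal is correct and follows essentially the same route as the paper: both hinge on the event that the $K(K-1)$ per-link best subcarriers are pairwise distinct, under which each link's singleton best-subcarrier assignment is simultaneously feasible and individually optimal, so P2 decouples into the subcarrier step followed by the expert-selection step and BCD lands on the joint optimum, with the bound given by the birthday-style non-collision probability. If anything, your write-up is more careful than the paper's: you explicitly verify the singleton-optimality of $E_{ij}^{comm}$ (the paper defers this to its subcarrier-allocation appendix) and you correctly identify the event as \emph{sufficient} for optimality, whereas the paper's proof misstates it as a ``necessary condition'' even though its inequality direction is the one a sufficient condition justifies.
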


\begin{proof}
Since $r_{ij}^{(m)}$ are i.i.d., the maximum value of $r_{ij}^{(m)}$ can occur for different $m$ with equal probability. Let $A$ denote the event where the maximum values of $r_{ij}^{(m)}$ for all $K(K-1)$ links occur on distinct subcarriers. The probability of this event is
\begin{equation}
\Pr (A) = \frac{\prod_{i=0}^{K(K-1)-1}(M-i)}{M^{K(K-1)}}.
\end{equation}
Note that if this event occurs, each link choosing the subcarrier with the largest $r_{ij}^{(m)}$ is the optimal solution, independent of the expert allocation. Hence, the subcarrier allocation algorithm within Algorithm~\ref{algo:bcd} will yield the optimal $\beta^*$, and given $\beta^*$, Algorithm~\ref{algo:expert} will also produce the optimal $\alpha^*$. Since $A$ is a necessary condition, it follows that
\begin{equation}
    \Pr \left( \alpha = \alpha^*, \beta = \beta^* \right) \geq \frac{\prod_{i=0}^{K(K-1)-1}(M-i)}{M^{K(K-1)}}.
\end{equation}
This completes the proof of Theorem~\ref{theo:optim}.
\end{proof}

Further noting that $\Pr \left( \alpha = \alpha^*, \beta = \beta^* \right) \rightarrow 1$ as $M \rightarrow \infty$, we can derive the following corollary:
\begin{Corollary}[Asymptotically Optimal]
\emph{When $M$ is sufficiently large, the solution derived by Algorithm~\ref{algo:bcd} will be optimal.}
\end{Corollary}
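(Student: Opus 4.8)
The plan is to prove the probability lower bound by isolating a single favorable event whose occurrence forces the block coordinate descent iteration to land on the global optimum regardless of initialization. The key observation I would exploit is that the subcarrier allocation subproblem P3 and the expert selection subproblem P1 are only coupled through the achievable rates $R_{ij}$, and that the coupling decouples completely when every inter-expert link can simultaneously grab its own best subcarrier without contention. First I would define the event
\begin{equation*}
A = \left\{ \text{for each link } (i,j),\ i \ne j,\ \arg\max_m r_{ij}^{(m)} \text{ are all distinct across the } K(K-1) \text{ links} \right\}.
\end{equation*}
Under $A$, the constraint C3 (exclusive subcarrier use) is automatically satisfiable when each link is assigned its individually rate-maximizing subcarrier, so the optimal subcarrier allocation $\beta^*$ is obtained independently of $\alpha$ and coincides with the per-link greedy choice.

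The second step is to compute $\Pr(A)$ under the i.i.d.\ assumption on $r_{ij}^{(m)}$. Since the entries are i.i.d.\ across $m$, for a fixed link the index of the maximum is uniform over the $M$ subcarriers, and the maximizing indices of different links are mutually independent and each uniform on $\{1,\dots,M\}$. Thus $A$ is precisely the event that $K(K-1)$ independent uniform draws from an $M$-element set are all distinct — a classical birthday-type count — giving
\begin{equation*}
\Pr(A) = \frac{M (M-1) \cdots (M - K(K-1) + 1)}{M^{K(K-1)}} = \frac{\prod_{i=0}^{K(K-1)-1}(M-i)}{M^{K(K-1)}}.
\end{equation*}
I would note that this step quietly requires the maximizer to be almost surely unique (no ties), which holds when $r_{ij}^{(m)}$ has a continuous distribution; for a discrete law one would need a tie-breaking convention, and I would flag this as a modeling assumption rather than dwell on it.

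The third step links the event to algorithmic optimality: conditioned on $A$, the subcarrier routine inside Algorithm~\ref{algo:bcd} returns $\beta^*$ on its very first pass (the per-link argmax), and since the rates $R_{ij}$ are then fixed at their globally achievable values, feeding $\beta^*$ into the optimal expert selection routine (Algorithm~\ref{algo:expert}, which solves P1 exactly) yields $\alpha^*$. Hence $A \subseteq \{\alpha = \alpha^*,\ \beta = \beta^*\}$, and monotonicity of probability gives the stated bound. The corollary then follows because each factor $(M-i)/M = 1 - i/M \to 1$ as $M \to \infty$ with $K$ fixed, and a finite product of terms tending to $1$ tends to $1$, so the success probability converges to $1$.

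The main obstacle I anticipate is not the probability computation — that is a routine distinct-values count — but rigorously justifying that $A$ is genuinely \emph{sufficient} for joint optimality. The subtle point is the claim that the per-link greedy subcarrier choice is globally optimal for P3: one must verify that maximizing each $R_{ij}$ separately also minimizes the total communication energy $\sum E_{ij}^{comm}$, which is true here only because under $A$ the exclusivity constraint C3 is non-binding (each link's optimum uses a distinct subcarrier), so the otherwise-coupled assignment problem separates. I would therefore make explicit that the energy $E_{ij}^{comm}$ is decreasing in $R_{ij}$ for fixed data size, and that C3 imposes no trade-off precisely on the event $A$. A secondary gap is that the theorem, as phrased, only asserts sufficiency ($A$ implies optimality) and yields a one-sided bound; the proof should be careful \emph{not} to claim $A$ is necessary, so I would state the inclusion $A \subseteq \{\alpha=\alpha^*, \beta=\beta^*\}$ in one direction only.
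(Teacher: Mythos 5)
Your proposal is correct and follows essentially the same route as the paper: the same event $A$ (all $K(K-1)$ links have distinct rate-maximizing subcarriers), the same birthday-type count giving $\Pr(A) = \prod_{i=0}^{K(K-1)-1}(M-i)/M^{K(K-1)}$, the same argument that under $A$ the per-link greedy allocation is optimal independently of $\alpha$ so that DES then returns $\alpha^*$, and the same limit $\Pr(A) \to 1$ as $M \to \infty$ for the corollary. Your added care is in fact an improvement on the paper's write-up: you correctly insist on the one-sided inclusion $A \subseteq \{\alpha=\alpha^*,\ \beta=\beta^*\}$ (the paper misstates $A$ as a ``necessary condition'' when sufficiency is what is used), and you flag the almost-sure uniqueness of the argmax, which the paper implicitly assumes.
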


\begin{Remark}
\emph{The assumption of a large $M$ is realistic in modern communication systems. For example, 802.11ax~\cite{80211ax} supports up to 2048 subcarriers, and 5G NR~\cite{3gpp_nr} can accommodate several thousand subcarriers. With $K=4$ and $M=2048$, the probability of obtaining the optimal solution exceeds 96.8\%.}
\end{Remark}

For Problem P2, we have noted that the constraints on $\alpha_{ij}^{(n)}$ and $\beta_{ij}^{(m)}$ are defined separately. When optimizing $\alpha_{ij}^{(n)}$ with $\beta_{ij}^{(m)}$ fixed, we only need to consider the C1 and C2, reducing P2 to P1, where we can apply Algorithm~\ref{algo:expert} to obtain an optimal solution. Similarly, when optimizing $\beta_{ij}^{(m)}$, P2 is reduced to P3 and we can use the subcarrier allocation algorithm. This allows us to sequentially minimize the energy consumption in P2 by optimizing each variable block in a round-robin fashion. We begin by initializing $\alpha_{ij}^{(n)}$ and $\beta_{ij}^{(m)}$ with a feasible solution, and then alternatively optimize $\alpha_{ij}^{(n)}$ and $\beta_{ij}^{(m)}$ until convergence. The implementation of solving the expert and subcarrier allocation problem P2 using BCD is outlined in Algorithm~\ref{algo:bcd}.

\begin{algorithm}[t]
\caption{Joint Expert and Subcarrier Allocation (JESA)}
\label{algo:bcd}
\textbf{Input:} The gating scores $g_{j}^{(l_i)} \left( u_{i}^{(n)} \right)$, the QoS requirement $z \gamma^{(l_i)}$, the maximum number of experts $D$, the computation power coefficient $a_j$, the data rate of subcarriers $r_{ij}^{(m)}$;\\
\textbf{Initialize:} $\alpha_{ij}^{(n)} \leftarrow 1, \beta_{ij}^{(m)} \leftarrow \text{Random Assign}(M, K(K-1))$; \\

\textbf{repeat:} \\

\quad $R_{ij} \leftarrow \sum_{m=1}^M \beta_{ij}^{(m)} r_{ij}^{(m)}$; \\
\quad \textbf{for} $i = 1, 2, \ldots, K$ \textbf{do}\\
\quad \quad \textbf{for} $n = 1, 2, \ldots, N_i$ \textbf{do}\\
\quad \quad \quad $\alpha_{ij}^{(n)'} \leftarrow$ DES$\left( \beta_{ij}^{(m)}, g_{j}^{(l_i)} \left( u_{i}^{(n)} \right), z \gamma^{(l_i)}, a_j \right)$;\\

\quad \textbf{for} $i = 1, 2, \ldots, K$ \textbf{do} \\
\quad \quad \textbf{for} $j = 1, 2, \ldots, K$ \textbf{do} \\
\quad \quad \quad $s_{ij} \leftarrow \sum_{n=1}^{N_i} \alpha_{ij}^{(n)'}$; \\
\quad $\beta_{ij}^{(m)'} \leftarrow$ Subcarrier Allocation$\left( s_{ij}, r_{ij}^{(m)} \right)$;\\

\textbf{until} $\alpha_{ij}^{(n)'} = \alpha_{ij}^{(n)}$ \textbf{and} $\beta_{ij}^{(m)'} = \beta_{ij}^{(m)}$\\

\textbf{return} $\alpha_{ij}^{(n)}, \beta_{ij}^{(m)}$
\end{algorithm}


The following proposition ensures the feasibility of the intermediate variables $\alpha_{ij}^{(n)}$ and $\beta_{ij}^{(m)}$ for P2, guaranteeing progress in each optimization step and the overall convergence of the algorithm. Moreover, since $\alpha_{ij}^{(n)}$ is updated to its optimal value given $\beta_{ij}^{(m)}$, and vice versa, the BCD algorithm will converge within a few iterations, thereby reducing the algorithm's running time.

\begin{Proposition}
\label{prop:bcd}
\emph{In each optimization step, the intermediate $\alpha_{ij}^{(n)}$ and $\beta_{ij}^{(m)}$ are feasible and conditionally optimal for P2, and Algorithm~\ref{algo:bcd} makes progress with each iteration until convergence.}
\end{Proposition}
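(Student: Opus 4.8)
The plan is to leverage two facts that make the general BCD framework behave well here: the feasible set of P2 factorizes, and each of the two block solvers is \emph{exact}. The first point, noted just before the statement, is that C1 and C2 constrain only $\alpha_{ij}^{(n)}$ whereas C3 constrains only $\beta_{ij}^{(m)}$; no constraint couples the two blocks, so the feasible region is a product $\mathcal{A}\times\mathcal{B}$. I would use this to dispatch feasibility first: the initializer Random Assign returns a $\beta\in\mathcal{B}$ (each subcarrier used by at most one link, which also keeps $R_{ij}>0$ on every active link once $M\ge K(K-1)$), and thereafter every $\alpha$ produced by DES satisfies C1 and C2 by construction, with the Top-$D$ fallback of Remark~\ref{remark} handling tokens for which C1 and C2 cannot hold simultaneously. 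Because the constraints decouple, overwriting one block with a feasible value leaves the pair in $\mathcal{A}\times\mathcal{B}$, so every intermediate $(\alpha,\beta)$ inside the loop of Algorithm~\ref{algo:bcd} is feasible for P2.

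Next I would establish conditional optimality and, from it, monotone progress. Fixing $\beta$ fixes the rates $R_{ij}$ and hence the coefficients $e_{ij}$, reducing P2 to P1; the min-sum decomposition preceding P1(a) then splits P1 into independent per-token problems P1(a), each solved to global optimality by DES, since its linear-relaxed bound is a valid lower bound and the branch-and-bound therefore never prunes the optimum. Hence the $\alpha$-update returns $\arg\min_{\alpha\in\mathcal{A}}F(\alpha,\beta)$, where $F$ denotes the P2 objective. Symmetrically, fixing $\alpha$ makes the computation energy constant and reduces P2 to the communication-only problem P3, which Appendix~\ref{appx:sa} recasts as an assignment problem solved optimally by Kuhn--Munkres, so the $\beta$-update returns $\arg\min_{\beta\in\mathcal{B}}F(\alpha,\beta)$. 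Exactness of both half-steps yields $F(\alpha^{(t+1)},\beta^{(t)})\le F(\alpha^{(t)},\beta^{(t)})$ and $F(\alpha^{(t+1)},\beta^{(t+1)})\le F(\alpha^{(t+1)},\beta^{(t)})$, so $F$ is non-increasing along the iterations; this is the claimed progress.

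Finally I would close the convergence argument. Since all variables are binary, $\mathcal{A}\times\mathcal{B}$ is finite and $F$ takes finitely many values, so a non-increasing $F$ must stabilize at some $F_\infty$ after finitely many iterations. I expect the main obstacle to lie exactly here: monotonicity by itself does not forbid the iterates from cycling among distinct feasible points that share the value $F_\infty$, so I cannot yet conclude that the stopping test $\alpha'=\alpha,\ \beta'=\beta$ ever triggers. To rule out such cycles I would invoke the i.i.d.\ continuous rate model already assumed in Theorem~\ref{theo:optim}: with probability one the induced energies are pairwise distinct, so each block minimizer is \emph{unique}. Observing that within one iteration both outputs depend only on the incoming $\beta$, a cycle at level $F_\infty$ would force $F(\alpha^{(t+1)},\beta^{(t)})=F(\alpha^{(t+1)},\beta^{(t+1)})=\min_{\beta}F(\alpha^{(t+1)},\beta)$; uniqueness of the minimizer then gives $\beta^{(t)}=\beta^{(t+1)}$, collapsing the cycle to a fixed point and forcing termination in finite time. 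Together these steps deliver feasibility, conditional optimality, monotone progress, and finite convergence, completing the proposed proof.
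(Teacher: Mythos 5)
Your proposal is correct, and its first two parts (feasibility, conditional optimality, monotone decrease) coincide with the paper's proof: the paper likewise argues feasibility from the separability of the constraints on $\alpha_{ij}^{(n)}$ and $\beta_{ij}^{(m)}$, and progress from the chain $F\left(\alpha^{(k)},\beta^{(k)}\right)\geq F\left(\alpha^{(k+1)},\beta^{(k)}\right)\geq F\left(\alpha^{(k+1)},\beta^{(k+1)}\right)$ obtained by exactness of DES and of the assignment-based subcarrier allocation. Where you genuinely diverge is in closing the convergence claim. The paper stays analytic: $F\geq 0$ is non-increasing, so it converges by the monotone convergence theorem, and the iterates admit convergent subsequences by Bolzano--Weierstrass; this establishes convergence of objective values but says nothing about the algorithm's stopping test $\left(\alpha'=\alpha,\ \beta'=\beta\right)$ ever firing. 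You instead exploit discreteness -- the feasible set is finite, so $F$ stabilizes in finitely many iterations -- and then confront the cycling issue the paper silently ignores, resolving it via almost-sure uniqueness of the block minimizers under the i.i.d.\ continuous rate model already assumed in Theorem~\ref{theo:optim}, together with the observation that one full iteration is a deterministic function of the incoming $\beta$. This buys a strictly stronger conclusion (finite termination of the actual algorithm, not just convergence of $F$) at the price of an extra distributional assumption the paper's own proof does not use. One small point to tighten: the a.s.-distinctness of assignment costs is slightly delicate because $s_{ij}$ depends on $\alpha^{(t+1)}$, which is itself a function of the channel realizations; since there are only finitely many candidate $(\alpha,\beta)$ configurations, the union of the corresponding tie events still has measure zero, so your argument survives, but this dependence deserves a sentence.
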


\begin{proof}
    See Appendix~\ref{appx:bcd}.
\end{proof}


\section{Experimental Results}

\subsection{Experimental Settings}

\subsubsection{Models and Datasets}

\begin{itemize}
    \item \textit{Expert Selection}: The experts deployed on edge nodes are Llama-3-8B-Instruct~\cite{llama3instruct} for general knowledge, Llama3-8B-Chinese-Chat~\cite{llama3chinese} specialized in Chinese Q\&A and Llama3-OpenBioLLM-8B~\cite{llama3bio} specialized in biomedical tasks. Gates are prepared according to their specialized domains. Models are tested on multi-domain datasets, including MMLU~\cite{hendrycks2021mmlu} for world knowledge, C-Eval~\cite{huang2024ceval} and CMMLU for Chinese Q\&A, and MMLU-Bio~\cite{hendrycks2021mmlu} and MedMCQA~\cite{pmlr-v174-pal22a-medmcqa} for medical queries;
    \item \textit{Energy Efficiency}: We initialize a wireless network with $K = 8$ devices. Using Mixtral-8x7B-Instruct-v0.1~\cite{jiang2024mixtral} as the MoE model, the DMoE system is initialized as Section~\ref{sec:initial}. The gates are pre-configured. The test dataset is MMLU-Anatomy, with each user assigned 1/8 of the data.
\end{itemize}

\subsubsection{Devices and Communications}

We set bandwidth to $B_0 = 1$ MHz and the transmission power to $P_0 = 1 \times 10^{-2}$ W, with a signal-to-noise ratio (SNR) of $P_0 / N_0 = 10$ dB. The size of a hidden state is set to $s_0 = 8$ kB, as defined in Llama-3 and Mixtral. We assume the channel gain follows Rayleigh fading with an average path loss of $10^{-2}$. The computation power coefficient is set to $a_j = j \times 10^{-3}$ J/token.

\subsubsection{Benchmark Schemes}

We adopt the following benchmark schemes:

\begin{itemize}
    \item \textit{Top-k Allocation} (Top-k): Select $k$ experts with the highest scores, and then perform optimal subcarrier allocation;
    \item \textit{Homogeneous Allocation} (H($z, D$)): Set QoS requirement to be $z$, the maximum expert number $D$, and a homogeneous factor $\gamma^{(l)} = 1$ for all layers $l$. Use Algorithm~\ref{algo:bcd} to perform expert and subcarrier allocation;
    \item \textit{Joint Expert and Subcarrier Allocation} (JESA($\gamma_0, D$)): Set $z = 1$, $\gamma^{(l)} = \gamma_0^l$ and adjust $\gamma_0$ to control the layer-specific QoS requirements, followed by energy-efficient JESA Algorithm~\ref{algo:bcd};
    \item \textit{Lower Bound} (LB($\gamma_0, D$)): Utilize the DES Algorithm~\ref{algo:expert} with $\gamma^{(l)} = \gamma_0^l$, but ignore the exclusive subcarrier allocation constraint in P3 (i.e., allowing concurrent subcarrier occupation) and select the subcarrier with the highest rate to obtain a lower bound on energy consumption.
\end{itemize}

\subsection{Expert Selection of DMoE}

We begin by examining the expert selection pattern of the DES Algorithm~\ref{algo:expert}, focusing on the tradeoff between expertise and channel diversity in expert selection. We manually create high-performing experts with higher gating scores and set their power consumption to be proportionally higher. Meanwhile, we include some low-performing, low-cost experts. The expert selection patterns of different values of $\gamma_0$ are shown in Fig.~\ref{fig:expert_selection}, where deeper color indicates higher selection probabilities. As expected, the results show that at lower layers, the task-relevance requirement is higher, thus DES favors high-performing experts to ensure accurate results, despite their higher power consumption. At higher layers, however, DES shifts towards low-cost experts to reduce power consumption, resulting in a noticeable change in the selection pattern. Furthermore, the value of $\gamma_0$ influences the point at which this shift occurs; a larger $\gamma_0$ delays the shift, allowing for the selection of more high-performing experts, leading to better inference results with higher power costs.

\begin{figure}[t]
    \centering
    \includegraphics[width=0.85\linewidth]{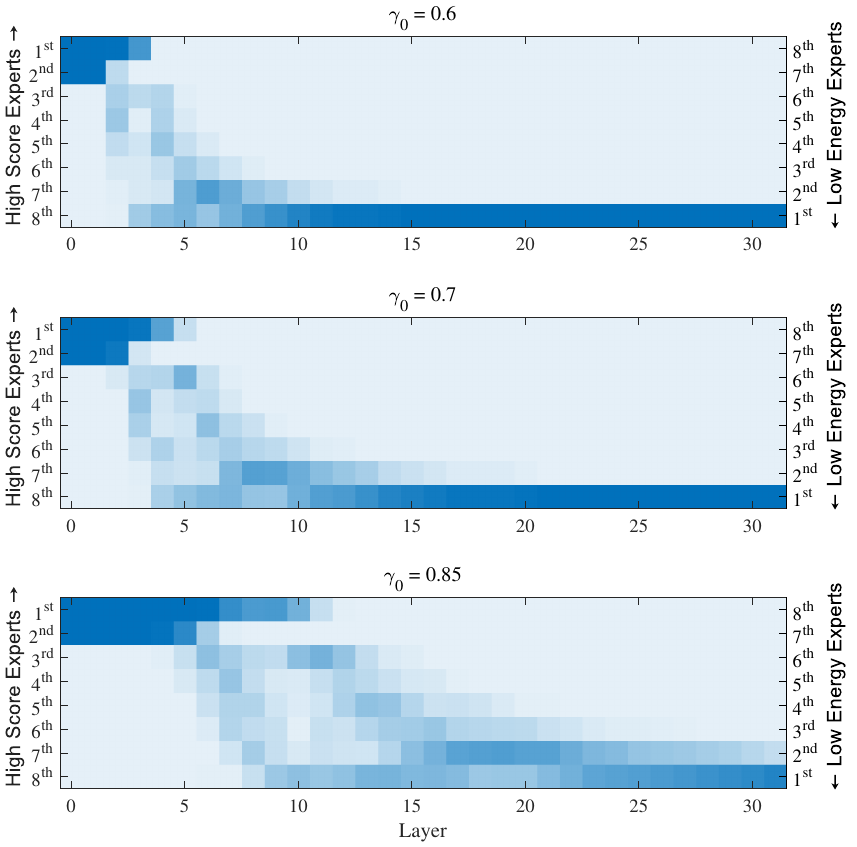}
    \caption{Expert selection patterns with layer importance factor $\gamma^{(l)} = \gamma_0^l$. A deeper color indicates a higher selection probability. At higher layers, DES shifts from preferring high-performing experts to low-cost experts.}
    \label{fig:expert_selection}
\end{figure}

Table~\ref{tab:result_llama} presents the inference accuracy and relative energy consumption (normalized to 1.0) of DES Algorithm~\ref{algo:expert} across multi-domain tasks.
Compared to the static Top-k selection which is simple but comes with huge overhead, the DES balances the task-relevance and energy consumption. Taking both into consideration, DES maintains high accuracy while significantly reducing energy consumption.
By setting a higher $\gamma_0$ in the layer importance factor $\gamma^{(l)} = \gamma_0^l$, DES achieves better performance across all tasks, albeit with increased power costs. Overall, the results demonstrate that DMoE protocol effectively leverages the specialized knowledge of experts and minimizes the overhead, resulting in high task-relevance expert selection, while dynamic expert selection significantly reduces power consumption, enabling energy-efficient inference.

\begin{table*}[t]
\caption{Performance of Dynamic Expert Selection on Multi-Domain Tasks}
\centering
\begin{tabular}{lcccccccccc}
\toprule
Model & \multicolumn{2}{c}{MMLU} & \multicolumn{2}{c}{C-Eval} & \multicolumn{2}{c}{CMMLU} & \multicolumn{2}{c}{MMLU-Bio} & \multicolumn{2}{c}{MedMCQA}\\
\cmidrule(lr){2-3} \cmidrule(lr){4-5} \cmidrule(lr){6-7} \cmidrule(lr){8-9} \cmidrule(lr){10-11} 
& Acc & En & Acc & En & Acc & En & Acc & En & Acc & En\\
\midrule
\textit{Individual Experts}\\
\quad Llama3-8B-Instruct & 63.8 & - & 51.4 & - & 51.2 & - & 72.3 & - & 57.0 & -\\
\quad Llama3-8B-Chinese-Chat & 63.1 & - & 51.4 & - & 52.1 & - & 72.2 & - & 55.3 & -\\
\quad Llama3-OpenBioLLM-8B & 61.1 & - & 48.0 & - & 47.3 & - & 76.2 & - & 57.7 & -\\
\midrule
\textit{Conventional Expert Selection}\\
\quad Llama3-MoE-3x8B (Top-1) & 63.6 & 0.48 & 51.4 & 0.51 & 51.2 & 0.46 & 73.7 & 0.46 & 57.6 & 0.47\\
\quad Llama3-MoE-3x8B (Top-2) & 64.1 & 1.00 & 51.9 & 1.00 & 51.9 & 1.00 & 75.5 & 1.00 & 58.4 & 1.00\\
\midrule
\textit{Dynamic Expert Selection (Proposed)}\\
\quad Llama3-MoE-3x8B (DES(0.6, 2)) & 64.2 & 0.12 & 51.9 & 0.19 & 51.6 & 0.12 & 73.1 & 0.14 & 57.4 & 0.14\\
\quad Llama3-MoE-3x8B (DES(0.7, 2)) & 64.4 & 0.16 & 52.1 & 0.24 & 51.6 & 0.14 & 73.7 & 0.22 & 57.7 & 0.20\\
\quad Llama3-MoE-3x8B (DES(0.8, 2)) & 64.5 & 0.19 & 52.3 & 0.30 & 51.8 & 0.24 & 73.6 & 0.26 & 58.0 & 0.23\\
\bottomrule
\end{tabular}
\label{tab:result_llama}
\end{table*}

\subsection{Energy Efficiency of DMoE}

We explore the energy-saving effects of JESA Algorithm~\ref{algo:bcd} at different layers of DMoE in Fig.~\ref{fig:energy}. It is worth noting that Algorithm~\ref{algo:bcd} achieves a near-optimal performance for all $\gamma_0$. One can observe that the cost per token of Top-2 scheduling is steady for all layers and fluctuates around a certain value. The selection policy we proposed, however, is dynamic and tends to select low-cost experts in high layers and attain a large reduction of energy consumption. The higher the layer is, the less contribution it will make to the final result, thus we can select experts at a lower cost with less attention to the gating scores. Moreover, it is obvious that by controlling the importance factor $\gamma_0$, the curves exhibit downward trends at different rates. A smaller $\gamma_0$ relaxes the task-relevance requirement thus a looser expert selection can be exploited, and hence a more rapid downward can be expected. Nevertheless, setting $\gamma_0$ too small is not recommended as it may incorporate too many low-performing experts in the lower layers, which could lead to a substantial accuracy decrease.

\begin{figure}[t]
    \centering
    \includegraphics[width=0.8\linewidth]{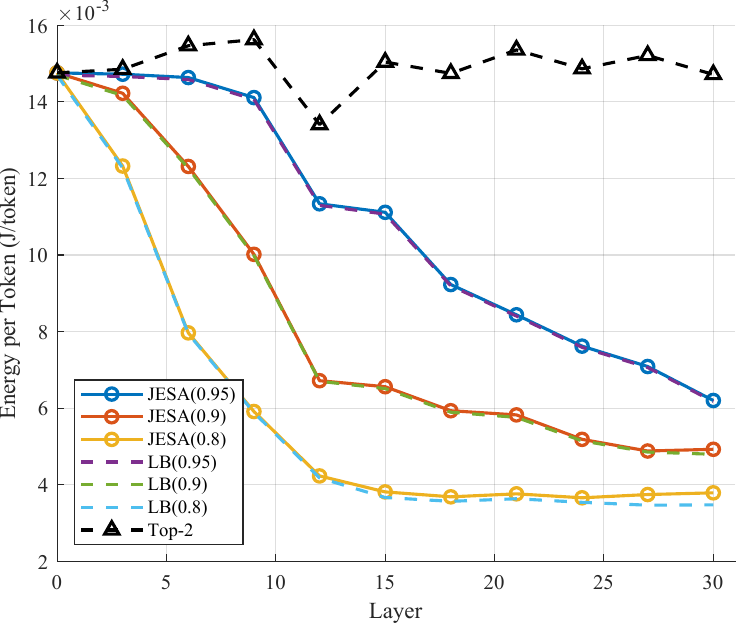}
    \caption{Energy consumption per token at different layers. JESA significantly reduced the communication and computation overhead of DMoE at higher layers and is close to the lower bound (LB).}
    \label{fig:energy}
\end{figure}

We further analyze the communication and computation energy consumed at different layers, as shown in Fig.~\ref{fig:comm_en} and Fig.~\ref{fig:comp_en} respectively, and compare the JESA with Top-2 and homogeneous allocation. JESA demonstrates a distinct approach to energy reduction compared to homogeneous allocation, in both communication and computation costs. Specifically, it maintains higher energy consumption at lower layers, with a gradual reduction as the scheduling progresses through higher layers. In contrast, homogeneous allocation preserves the behavior of Top-2 selection, reducing energy consumption at a similar rate across layers, regardless of their depth. Although both allocation schemes reduce energy consumption compared to Top-2 selection, our findings show that JESA results in lower average energy consumption for both communication and computation.

\begin{figure}[t]
    \centering
    \includegraphics[width=0.8\linewidth]{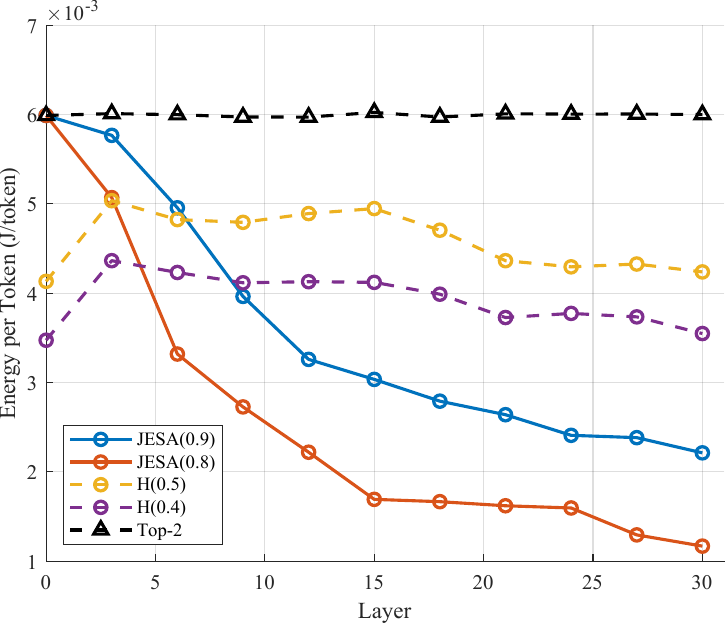}
    \caption{Communication energy consumption per token at different layers.}
    \label{fig:comm_en}
\end{figure}

\begin{figure}[t]
    \centering
    \includegraphics[width=0.8\linewidth]{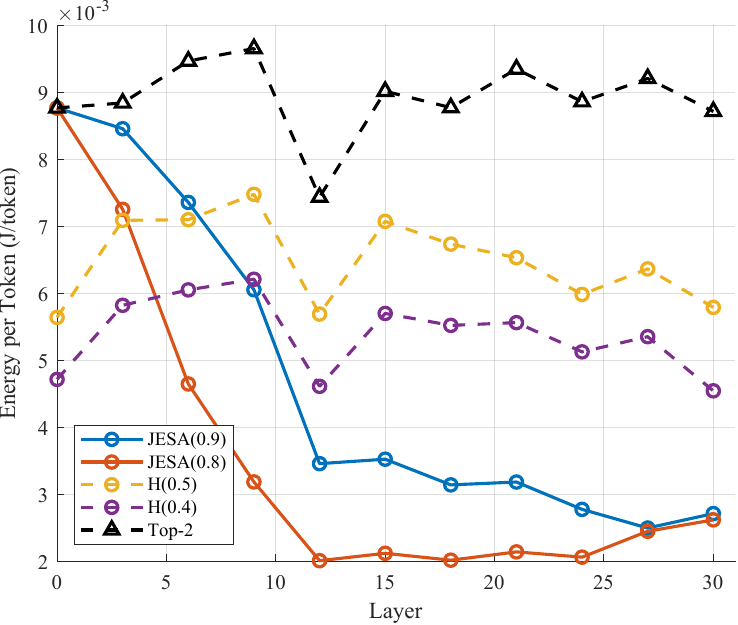}
    \caption{Computation energy consumption per token at different layers.}
    \label{fig:comp_en}
\end{figure}

Fig.~\ref{fig:en_acc} shows the tradeoff between average accuracy and total energy consumption in the DMoE system, tested on MMLU-Anatomy. With JESA, the overall energy consumed by communication and computation of the system can be reduced considerably to the level of 50\%, while the system still maintains a high accuracy. Using JESA and setting $\gamma_0 = 0.92$, a 40\% energy fall-off is achieved in exchange for only 5\% accuracy decrease compared to Top-2 scheduling. A greater energy consumption cut can also be achieved, though with higher accuracy sacrifice. Moreover, the advantage of JESA over homogeneous allocation can be evidently observed. The JESA algorithm steadily outperforms homogeneous policy, as JESA consumes much less energy to reach the same accuracy, and acquires higher accuracy for the same energy consumed. This advantage of heterogeneous policy over homogeneous one confirms the validity of our algorithm. It is worth mentioning that both the JESA and homogeneous selections converge to the Top-k selection for $\gamma \rightarrow 1$, which agrees with our algorithm design and Remark~\ref{remark}.

\begin{figure}[t]
    \centering
    \includegraphics[width=0.8\linewidth]{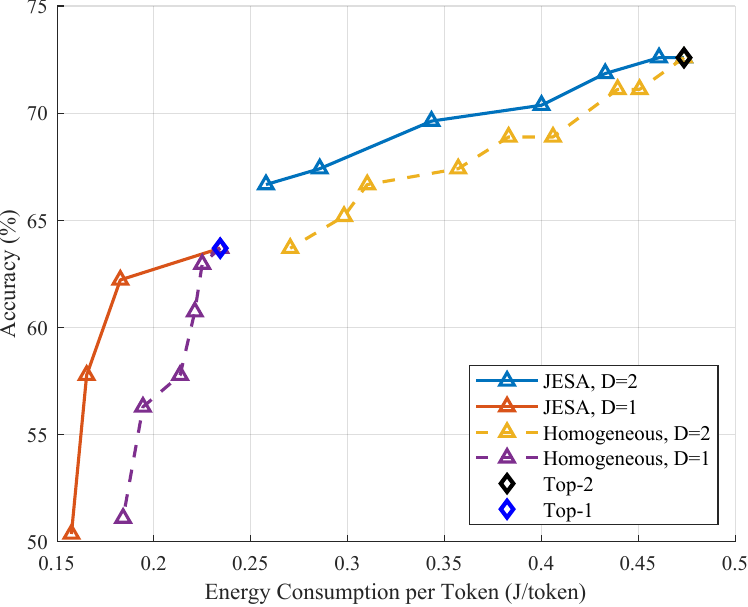}
    \caption{The tradeoff between inference accuracy and energy consumption. The DMoE protocol's overhead can be reduced considerably by JESA with a little decrease in accuracy. JESA always outperforms the homogeneous allocation.}
    \label{fig:en_acc}
\end{figure}

\section{Conclusion}

In this work, we proposed the DMoE protocol and addressed the expert selection and subcarrier allocation problem within the system. The DMoE framework represents a specialized instance of the general Integrated AI and Communications (IAAC). A key challenge in IAAC lies in managing the unprecedented integration of AI metrics with communication requirements. This challenge stands in contrast to traditional AI and communication systems, which focus solely on performance or transmission. The JESA algorithm effectively balances the tradeoff between task-relevance and channel conditions by jointly optimizing AI performance and RRM.
Notably, we demonstrated that introducing a hyper-parameter allows precise control over the tradeoff between performance and efficiency, enabling the framework to adapt to broader scenarios. The proposed framework minimizes energy consumption while maintaining high quality through task-relevant expert selection and efficient subcarrier allocation, which is a promising solution for connected intelligence.

This work introduces a novel direction in the field of edge AI and MoE, highlighting the paradigm shift in further research where joint analysis of the AI behaviors and communication nature is needed to optimize the system performance.
By integrating AI and communication into an end-to-end framework, this paradigm delivers substantial improvements in AI performance and resource utilization, paving the way for controllable and intelligent edge AI systems.
Several promising areas warrant further exploration. First, regarding the integration of AI and communication technologies, future research could focus on further optimizing the inference overhead by leveraging advanced 5G/6G features, such as ultra-reliable low-latency communications. Second, considering the wireless channel nature, the random participation of edge nodes incorporating the dynamic entrance and exit of experts could enable ad-hoc DMoE assembling. Lastly, for scenarios involving massive IoT device communication, future research directions include designing sparse routing and retrieval mechanisms to efficiently manage interactions with massive tiny experts (over a million).

\appendix

\subsection{Proof of Proposition~\ref{prop:nphard}}
\label{appx:nphard}
We prove Proposition~\ref{prop:nphard} by reducing the NP-hard knapsack problem to P1(a). The knapsack problem is defined as follows:
\begin{equation*}
(\text{KP}) \quad
    \begin{alignedat}{2}
    \max_{x_i} \quad & \sum_{i=1}^K v_i x_i \\
    \text{s.t.} \quad & \sum_{i=1}^K w_i x_i \leq c,\\
    &x_i \in \{0,1\} && \forall i.
    \end{alignedat}
\end{equation*}
Let $\xi_i = 1 - x_i$, so that $x_i = 1 - \xi_i$. Substituting $x_i$ with $\xi_i$ in the objective function and constraints, and converting the maximization into minimization, we obtain:
\begin{equation*}
(\text{KP(a)}) \quad
    \begin{alignedat}{2}
    \min_{\xi_i} \quad & \sum_{i=1}^K v_i \xi_i \\
    \text{s.t.} \quad & \sum_{i=1}^K w_i \xi_i \geq \sum_{i=1}^K w_i - c,\\
    &\xi_i \in \{0,1\} && \forall i.
    \end{alignedat}
\end{equation*}
This formulation is also NP-hard. Next, let $\xi_i = \alpha_{j}^{(n)}$, $v_i = e_{j}$, $w_i = g_j^{(l)} \left( u^{(n)} \right)$, and $c = z \gamma^{(l)}$. This transforms the problem into P1(a), except for the second constraint on the maximum number of experts. If we assume P1(a) is solvable in polynomial time for any $D$ in the second constraint, letting $D \rightarrow \infty$ would automatically satisfy this constraint, thereby providing a polynomial-time algorithm for KP(a), which contradicts the NP-hardness of KP(a). This completes the proof of Proposition~\ref{prop:nphard}.

\subsection{Optimal Subcarrier Allocation}
\label{appx:sa}


To solve P3, noting that for $\forall i,j$,
\begin{equation}
\frac{\sum_{m=1}^M \beta_{ij}^{(m)}}{\sum_{m=1}^M \beta_{ij}^{(m)} r_{ij}^{(m)}} \geq \frac{1}{\max_m{r_{ij}^{(m)}}},
\end{equation}
equality holds if and only if for $m' = \arg \max_m r_{ij}^{(m)}$, $\beta_{ij}^{(m')} = 1$ and $\beta_{ij}^{(m)} = 0$ otherwise. Leveraging this property, Problem P3 can be rewritten as:
\begin{equation*}
(\text{P3(a)}) \quad
    \begin{alignedat}{2}
    \min_{\beta_{ij}^{(m)}} \quad & \sum_{i=1}^K \sum_{j=1, j \ne i}^K \frac{P_0 s_{ij}}{\sum_{m=1}^M \beta_{ij}^{(m)} r_{ij}^{(m)}} \\
    \text{s.t.} \quad & \text{C3} && \forall m,\\
    &\sum_{m=1}^M \beta_{ij}^{(m)} \leq 1 && \forall i,j,\\
    &\beta_{ij}^{(m)} \in \{0,1\} && \forall i,j,m.
    \end{alignedat}
\end{equation*}
The problem can be represented using a bipartite graph, with one set of vertices representing the links from $i$ to $j$ and the other set representing the subcarriers.
The weight of edge connecting link $i$ to $j$ and subcarrier $m$ is the energy cost $w_{ij}^{(m)} = P_0 s_{ij} / r_{ij}^{(m)}$.
The optimization objective of P3(a) is to find a matching of links and subcarriers in the weighted bipartite graph that minimizes the sum of the edge weights.
This assignment problem is known to be solvable in polynomial time, and several assignment algorithms can be adapted to find the optimal subcarrier allocation~\cite{kuhn1955hungarian}.

\subsection{Proof of Proposition~\ref{prop:bcd}}
\label{appx:bcd}
We first prove the feasibility of the intermediate solution for P1 during the optimization. We initialize $\alpha_{ij}^{(n)}$ and $\beta_{ij}^{(m)}$ as a feasible solution for P1. Noting that the constraints on $\alpha_{ij}^{(n)}$ and $\beta_{ij}^{(m)}$ are separable in P1, and P2 and P3 contain the respective constraints on $\alpha_{ij}^{(n)}$ and $\beta_{ij}^{(m)}$ respectively, the updated $\alpha_{ij}^{(n)}$ and $\beta_{ij}^{(m)}$ will automatically form a feasible solution for P1.

Let $F \left(\alpha^{(k)}, \beta^{(k)} \right)$ denote the optimization objective in P1. For each optimization step $k$, we have
\begin{equation}
F \left(\alpha^{(k)}, \beta^{(k)} \right) \geq F \left(\alpha^{(k+1)}, \beta^{(k)} \right) \geq F \left(\alpha^{(k+1)}, \beta^{(k+1)} \right),
\end{equation}
and $F \left(\alpha^{(k)}, \beta^{(k)} \right) \geq 0$. By the monotone convergence theorem, we conclude that $F \left(\alpha^{(k)}, \beta^{(k)} \right)$ converge to $F^*$. Furthermore, by the Bolzano-Weierstrass theorem, we conclude that $\alpha^{(k)}$ and $\beta^{(k)}$ have subsequences that converge to $\alpha^*$ and $\beta^*$, respectively. This completes the proof of Proposition~\ref{prop:bcd}.

\bibliographystyle{IEEEtran}
\bibliography{Ref}

\begin{thebibliography}{10}
\providecommand{\url}[1]{#1}
\csname url@samestyle\endcsname
\providecommand{\newblock}{\relax}
\providecommand{\bibinfo}[2]{#2}
\providecommand{\BIBentrySTDinterwordspacing}{\spaceskip=0pt\relax}
\providecommand{\BIBentryALTinterwordstretchfactor}{4}
\providecommand{\BIBentryALTinterwordspacing}{\spaceskip=\fontdimen2\font plus
\BIBentryALTinterwordstretchfactor\fontdimen3\font minus \fontdimen4\font\relax}
\providecommand{\BIBforeignlanguage}[2]{{%
\expandafter\ifx\csname l@#1\endcsname\relax
\typeout{** WARNING: IEEEtran.bst: No hyphenation pattern has been}%
\typeout{** loaded for the language `#1'. Using the pattern for}%
\typeout{** the default language instead.}%
\else
\language=\csname l@#1\endcsname
\fi
#2}}
\providecommand{\BIBdecl}{\relax}
\BIBdecl

\bibitem{jacobs1991adaptivemoe}
R.~A. Jacobs, M.~I. Jordan, S.~J. Nowlan, and G.~E. Hinton, ``Adaptive mixtures of local experts,'' \emph{Neural Comput.}, vol.~3, no.~1, pp. 79--87, 1991.

\bibitem{jordan1994hierarchicalmoe}
M.~I. Jordan and R.~A. Jacobs, ``Hierarchical mixtures of experts and the {EM} algorithm,'' \emph{Neural Comput.}, vol.~6, no.~2, pp. 181--214, 1994.

\bibitem{shazeer2017outrageouslymoe}
N.~Shazeer, A.~Mirhoseini, K.~Maziarz, A.~Davis, Q.~Le, G.~Hinton, and J.~Dean, ``Outrageously large neural networks: The sparsely-gated mixture-of-experts layer,'' in \emph{Proc. Int. Conf. Learn. Represent. (ICLR)}, 2017.

\bibitem{liu2024deepseek}
A.~Liu, B.~Feng, B.~Xue, B.~Wang, B.~Wu, C.~Lu, C.~Zhao, C.~Deng, C.~Zhang, C.~Ruan \emph{et~al.}, ``{DeepSeek-V3} technical report,'' \emph{arXiv preprint arXiv:2412.19437}, 2024.

\bibitem{edge2023letaief}
K.~B. Letaief, Y.~Shi, J.~Lu, and J.~Lu, ``Edge artificial intelligence for {6G}: Vision, enabling technologies, and applications,'' \emph{IEEE J. Sel. Areas Commun.}, vol.~40, no.~1, pp. 5--36, 2022.

\bibitem{inference2023shao}
J.~Shao, Y.~Mao, and J.~Zhang, ``Task-oriented communication for multidevice cooperative edge inference,'' \emph{IEEE Trans. Wireless Commun.}, vol.~22, no.~1, pp. 73--87, 2023.

\bibitem{edge2020li}
E.~Li, L.~Zeng, Z.~Zhou, and X.~Chen, ``Edge {AI}: On-demand accelerating deep neural network inference via edge computing,'' \emph{IEEE Trans. Wireless Commun.}, vol.~19, no.~1, pp. 447--457, 2020.

\bibitem{communication2020shao}
J.~Shao and J.~Zhang, ``Communication-computation trade-off in resource-constrained edge inference,'' \emph{IEEE Commun. Mag.}, vol.~58, no.~12, pp. 20--26, 2020.

\bibitem{liu2023earlyexit}
Z.~Liu, Q.~Lan, and K.~Huang, ``Resource allocation for multiuser edge inference with batching and early exiting,'' \emph{IEEE J. Sel. Areas Commun.}, vol.~41, no.~4, pp. 1186--1200, 2023.

\bibitem{xue2024wdmoe}
N.~Xue, Y.~Sun, Z.~Chen, M.~Tao, X.~Xu, L.~Qian, S.~Cui, and P.~Zhang, ``{WDMoE}: Wireless distributed large language models with mixture of experts,'' 2024.

\bibitem{wu2024mba}
H.~Wu, Q.~Zeng, and K.~Huang, ``Efficient multiuser {AI} downloading via reusable knowledge broadcasting,'' \emph{IEEE Trans. Wireless Commun.}, vol.~23, no.~8, pp. 10\,459--10\,472, 2024.

\bibitem{mecsurvey}
Y.~Mao, C.~You, J.~Zhang, K.~Huang, and K.~B. Letaief, ``A survey on mobile edge computing: The communication perspective,'' \emph{IEEE Commun. Surveys Tuts.}, vol.~19, no.~4, pp. 2322--2358, 2017.

\bibitem{tradeoff2018li}
S.~Li, M.~A. Maddah-Ali, Q.~Yu, and A.~S. Avestimehr, ``A fundamental tradeoff between computation and communication in distributed computing,'' \emph{IEEE Trans. Inf. Theory}, vol.~64, no.~1, pp. 109--128, 2018.

\bibitem{mecenergy}
K.~Cheng, Y.~Teng, W.~Sun, A.~Liu, and X.~Wang, ``Energy-efficient joint offloading and wireless resource allocation strategy in multi-{MEC} server systems,'' in \emph{Proc. IEEE Int. Conf. Commun. (ICC)}, 2018, pp. 1--6.

\bibitem{saleem2020mecd2d}
U.~Saleem, Y.~Liu, S.~Jangsher, X.~Tao, and Y.~Li, ``Latency minimization for {D2D}-enabled partial computation offloading in mobile edge computing,'' \emph{IEEE Trans. Veh. Technol.}, vol.~69, no.~4, pp. 4472--4486, 2020.

\bibitem{you2016mec}
C.~You, K.~Huang, and H.~Chae, ``Energy efficient mobile cloud computing powered by wireless energy transfer,'' \emph{IEEE J. Sel. Areas Commun.}, vol.~34, no.~5, pp. 1757--1771, 2016.

\bibitem{you2017mec}
C.~You, K.~Huang, H.~Chae, and B.-H. Kim, ``Energy-efficient resource allocation for mobile-edge computation offloading,'' \emph{IEEE Trans. Wireless Commun.}, vol.~16, no.~3, pp. 1397--1411, 2017.

\bibitem{kivanc2003allocation}
D.~Kivanc, G.~Li, and H.~Liu, ``Computationally efficient bandwidth allocation and power control for {OFDMA},'' \emph{IEEE Trans. Wireless Commun.}, vol.~2, no.~6, pp. 1150--1158, 2003.

\bibitem{zhang2018allocation}
H.~Zhang, H.~Liu, J.~Cheng, and V.~C.~M. Leung, ``Downlink energy efficiency of power allocation and wireless backhaul bandwidth allocation in heterogeneous small cell networks,'' \emph{IEEE Trans. Commun.}, vol.~66, no.~4, pp. 1705--1716, 2018.

\bibitem{niyato2006allocation}
D.~Niyato and E.~Hossain, ``Queue-aware uplink bandwidth allocation and rate control for polling service in ieee 802.16 broadband wireless networks,'' \emph{IEEE Trans. Mobile Comput.}, vol.~5, no.~6, pp. 668--679, 2006.

\bibitem{Radunovic2007fairness}
B.~Radunovic and J.-Y. Le~Boudec, ``A unified framework for max-min and min-max fairness with applications,'' \emph{IEEE/ACM Trans. Netw.}, vol.~15, no.~5, pp. 1073--1083, 2007.

\bibitem{dai2009fairness}
L.~Dai, W.~Chen, L.~J. Cimini, and K.~B. Letaief, ``Fairness improves throughput in energy-constrained cooperative ad-hoc networks,'' \emph{IEEE Trans. Wireless Commun.}, vol.~8, no.~7, pp. 3679--3691, 2009.

\bibitem{sun2021allocation}
J.~Sun, L.~Wang, Z.~Jiang, S.~Zhou, and Z.~Niu, ``Age-optimal scheduling for heterogeneous traffic with timely throughput constraints,'' \emph{IEEE J. Sel. Areas Commun.}, vol.~39, no.~5, pp. 1485--1498, 2021.

\bibitem{ceran2019allocation}
E.~T. Ceran, D.~Gündüz, and A.~György, ``Average age of information with hybrid arq under a resource constraint,'' \emph{IEEE Trans. Wireless Commun.}, vol.~18, no.~3, pp. 1900--1913, 2019.

\bibitem{fedus2022switch}
W.~Fedus, B.~Zoph, and N.~Shazeer, ``Switch transformers: Scaling to trillion parameter models with simple and efficient sparsity,'' \emph{J. Mach. Learn. Res. (JMLR)}, vol.~23, no. 120, pp. 1--39, 2022.

\bibitem{xugpunew}
Y.~Xu, S.~Zhou, and Z.~Niu, ``{SMDP}-based dynamic batching for improving responsiveness and energy efficiency of batch services,'' \emph{IEEE Trans. Parallel Distrib. Syst.}, pp. 1--16, 2025.

\bibitem{jiang2024mixtral}
A.~Q. Jiang, A.~Sablayrolles, A.~Roux, A.~Mensch, B.~Savary, C.~Bamford, D.~S. Chaplot, D.~d.~l. Casas, E.~B. Hanna, F.~Bressand \emph{et~al.}, ``Mixtral of experts,'' \emph{arXiv preprint arXiv:2401.04088}, 2024.

\bibitem{llama3instruct}
\BIBentryALTinterwordspacing
AI@Meta, ``{Meta-Llama-3-8B-Instruct},'' 2024. [Online]. Available: \url{https://huggingface.co/meta-llama/Meta-Llama-3-8B-Instruct}
\BIBentrySTDinterwordspacing

\bibitem{llama3chinese}
\BIBentryALTinterwordspacing
S.~Wang, Y.~Zheng, G.~Wang, S.~Song, and G.~Huang, ``{Llama3-8B-Chinese-Chat},'' 2024. [Online]. Available: \url{https://huggingface.co/shenzhi-wang/Llama3-8B-Chinese-Chat}
\BIBentrySTDinterwordspacing

\bibitem{llama3bio}
\BIBentryALTinterwordspacing
A.~Pal, ``{Llama3-OpenBioLLM-8B},'' 2024. [Online]. Available: \url{https://huggingface.co/aaditya/Llama3-OpenBioLLM-8B}
\BIBentrySTDinterwordspacing

\bibitem{hendrycks2021mmlu}
D.~Hendrycks, C.~Burns, S.~Basart, A.~Zou, M.~Mazeika, D.~Song, and J.~Steinhardt, ``Measuring massive multitask language understanding,'' in \emph{Proc. Int. Conf. Learn. Represent. (ICLR)}, 2021.

\bibitem{huang2024ceval}
Y.~Huang, Y.~Bai, Z.~Zhu, J.~Zhang, J.~Zhang, T.~Su, J.~Liu, C.~Lv, Y.~Zhang, J.~Lei, Y.~Fu, M.~Sun, and J.~He, ``{C-Eval}: A multi-level multi-discipline {Chinese} evaluation suite for foundation models,'' in \emph{Proc. Adv. Neural Inf. Process. Syst. (NeurIPS)}, 2023.

\bibitem{pmlr-v174-pal22a-medmcqa}
A.~Pal, L.~K. Umapathi, and M.~Sankarasubbu, ``{MedMCQA}: A large-scale multi-subject multi-choice dataset for medical domain question answering,'' in \emph{Proc. Conf. Health Inference Learn. (CHIL)}, vol. 174, 2022, pp. 248--260.

\bibitem{lepikhin2021gshard}
D.~Lepikhin, H.~Lee, Y.~Xu, D.~Chen, O.~Firat, Y.~Huang, M.~Krikun, N.~Shazeer, and Z.~Chen, ``{GS}hard: Scaling giant models with conditional computation and automatic sharding,'' in \emph{Proc. Int. Conf. Learn. Represent. (ICLR)}, 2021.

\bibitem{wang2020convergence}
X.~Wang, Y.~Han, V.~C.~M. Leung, D.~Niyato, X.~Yan, and X.~Chen, ``Convergence of edge computing and deep learning: A comprehensive survey,'' \emph{IEEE Commun. Surveys Tuts.}, vol.~22, no.~2, pp. 869--904, 2020.

\bibitem{kuhn1955hungarian}
H.~W. Kuhn, ``The {Hungarian} method for the assignment problem,'' \emph{Naval Res. Logist. Quart.}, vol.~2, no. 1-2, pp. 83--97, 1955.

\bibitem{80211ax}
\emph{IEEE Standard for Information Technology--Telecommunications and Information Exchange between Systems Local and Metropolitan Area Networks--Specific Requirements Part 11: Wireless LAN Medium Access Control (MAC) and Physical Layer (PHY) Specifications Amendment 1: Enhancements for High-Efficiency WLAN}, IEEE Std., 2021.

\bibitem{3gpp_nr}
\BIBentryALTinterwordspacing
``{NR}; physical channels and modulation (release 17),'' 3rd Generation Partnership Project (3GPP), Tech. Rep. TS 38.211 v17.4.0, 2023. [Online]. Available: \url{https://www.3gpp.org}
\BIBentrySTDinterwordspacing

\end{thebibliography}

\end{document}